\numberwithin{equation}{section}
\numberwithin{figure}{section}
\theoremstyle{plain}
\newtheorem{thm}{\protect\theoremname}[section]
\theoremstyle{definition}
\newtheorem{defn}[thm]{\protect\definitionname}
\theoremstyle{plain}
\newtheorem{lem}[thm]{\protect\lemmaname}
\theoremstyle{remark}
\newtheorem{rem}[thm]{\protect\remarkname}
\DeclareMathAlphabet{\mathcal}{OMS}{cmsy}{m}{n}
\providecommand{\definitionname}{Definition}
\providecommand{\lemmaname}{Lemma}
\providecommand{\remarkname}{Remark}
\providecommand{\theoremname}{Theorem}
\begin{document}
\global\long\def\Sgm{\boldsymbol{\Sigma}}%

\global\long\def\W{\boldsymbol{W}}%

\global\long\def\H{\boldsymbol{H}}%

\global\long\def\P{\mathbb{P}}%

\global\long\def\Q{\mathbb{Q}}%

\title{Dirac operators and field equations\\ of the gravitational field and matter fields}
\author{Zhongmin Qian\thanks{Exeter College, University of Oxford, Oxford OX1 3DP. Email: \protect\href{mailto:zhongmin.qian@exeter.ox.ac.uk}{zhongmin.qian@exeter.ox.ac.uk}}}
\maketitle
\begin{abstract}
Dirac operators on curved space-times are introduced with the help
of a new point-view that observers have to be included in the formulation
of natural laws. The class of Dirac operators are Lorentz invariant
in the sense that the transformation rule is specified under diffeomorphisms
of the space-time which preserve the time orientation and the gravitational
field. Moreover these Dirac operators, like the original Dirac's operator
with the special relativity, satisfy the Hamiltonian relation required
by the general theory of relativity up to a correction due to the setup of a reference frame. In order to generalise (or discover)
Dirac operators, which have been an important building block in the
quantum field theories (QFTs), to curved space-times, we bring observers (which
are elements of the principal orthonormal bundle over the space-time
with its structure group being the proper Lorentz group) into the
description of Fermion fields in the presence of a gravitational field. 
As a consequence field equations combining the gravitational field and matter fields may be formulated. This work suggests that 
observational effects (due to the setup of references) in the presence of a strong gravitational 
field, i.e. matter, may be unavoidable.

\medskip

\emph{Subject areas}: QFT on curved space-times, Gravitation

\end{abstract}

\section{Introduction}

The Dirac operator introduced in \citep{Dirac1928}
is a key component in the Largangians (see Section 5 below) which
define the standard model (cf. Glashow \citep{Glashow1961}, Salam
\citep{Salam1968}, Weinberg \citep{Weinberg1967}, and \citep{Weinberg-Fields}, \citep{GiuntiKim2007} for more details). Although at the time of the discovery
of his relativistic wave equation, Dirac (cf. \citep{Dirac-principles})
did not feel that there is a need to include the gravitational
field in the wave equation for electron, there has been continuing
efforts to generalise Dirac's equation to curved space-times, for example  Schr\"odinger \citep{Schrodinger1929,Schrodinger1932},
Fock \citep{Fock1929}, Bargmann \citep{Bargmann1932}, Infeld and
van der Waerden \citep{InfeldWaerden1933}, Gursey-Lee \citep{GurseyLee1963},
and Dirac himself \citep{Dirac1935,Dirac1936}. Cartan \citep{Cartan1937}
made a systematic analysis of Dirac's equation from a point-view of
spinor representation of the Lorentz group, and concluded that in
general it is impossible to define a Dirac operator which is invariant
under the isometry group (generalised Lorentz invariance) of  curved
space-times.

The Klein-Gordon equation describing fields with integer spin, discovered by Schr\"odinger \citep{Schrodinger1926},
Gordon \citep{Gordon1926}, Fock \citep{Fock1926}, Klein \citep{Klein1926}
and de Donder and van Dungen \citep{deDonder1926} independently,
may be formulated on curved
space-times. The Dirac operator
on curved space-times was introduced in \citep{Fock1929,Schrodinger1932}
(cf. for details \citep{BirrelDavis1982,Fulling1989,ParkerToms2009})
by using the notion of spinor connections, which can be constructed
by using the tetrad formalism (cf. Weyl \citep{Weyl1950} and
the recent review \citep{Pollock2010}). Dirac operators introduced in literature
are therefore defined only locally, and a global construction relies however on the existence of  spin structures on the space-time in question. 

In complying with the special relativity, Dirac's operator is defined
as $\gamma^{\mu}D_{\mu}$, where $\gamma=(\gamma^{\mu})$ is a representation
of the Gamma matrices and $D_{\mu}$ is a connection gauging interaction
vector fields. It is crucial that the Dirac operator is invariant
under the proper Lorentz group $\mathscr{L}_{0}$ and obeys the transformation
rule for the change of the representation $\gamma=(\gamma^{\mu})$
of gamma matrices (cf. \citep[Ch. XX, pages 896-904]{MessiahII}).
It is worthy pointing out that the breakdown
of invariance of the nature laws, such as the violation of parity\footnote{The parity transformation in weak interaction is implemented by transforming
the Gamma matrices in the Dirac operator under the space inversion
operation.} \citep{LeeYang1956,Wu1957}, and the CP violation (cf. \citep{Landau1957,Christenson1964}
and \citep{Bigi-Sanda2009,Branco1999} for detail)\footnote{The CP transformation in QFT is implemented by changing the Gamma
matrices and sign of the charges coupled with vector fields that describe
interactions in the Dirac operator too.} have their origin in the breakdown of the invariance of the
Lagrangians involving Dirac operators under the full group of isometrics
of the space-time. Let us stress that the transformation rules of the Dirac
operator are vital for constructing successful field theories. 

Recall that QFT invoke mainly two
kinds of fields, the tensor type fields including gauge fields (Boson
fields) and Dirac fields (Fermion fields). The Dirac operators come
to play an important role\footnote{In QFT, Lagrangians are constructed by coupling tensor type fields
with gauges and the Fermion fields with vector fields via the Dirac
operators. The part of tensor type fields can be made Lorentz invariant on curved space-time.} for coupling gauge fields with the Fermion fields. Successful
theories have to satisfy the Lorentz invariance, which seems an obstacle
for Fermion fields in the presence of gravitational field. In the
general theory of relativity, Lorentz invariance means that theories
have to be invariant under any diffemorphisms of the space-time which
preserve both the time orientation and the gravitational field. The
existing literature on QFT on curved space-times (cf. \citep{BirrelDavis1982,ParkerToms2009,Wald 1994,Wald2006})
provides us with a scheme based on the (local) tetrad formalism.
The 
Lorentz invariance in this context has not been explored to the best knowledge of
the present author. On the other hand, successful theories are developed
over higher dimensional  space-times (of more than 5 dimensions), where the mystery
of the additional six dimensions in string theories for example, and
additional particles are not explained satisfactorily. There are still
other very interesting approaches in this direction proposed recently, for example, 
Oppenheim \citep{Oppenheim2023,OppenheimAND2023}.

To define global invariant Dirac operators on curved space-times,
there is no need to depart too far away from the accepted theoretical
frameworks, though a new, and perhaps controversial, point-view has
to be introduced. In the past \emph{reference frames}, or \emph{observers}, are not allowed to
play a role in natural laws\footnote{The common statement on the relativity principle says that the natural
laws should take the \emph{same form} mathematically, independent of reference frames. We argue
that this statement is subject to be debated and what is the same
form under different references is not well defined.}. It was commonly assumed that the idea of relativity seems particularly
design to address the formulation of natural laws free of references.
However, like in quantum physics, measurements can alter observations,
we may argue that observers can play a role in the \emph{description}
of the natural laws, and therefore observers should not be expelled
from the formulation of the natural laws. The relativity principle
demands for not the same mathematical forms of natural laws but for
an agreement of the descriptions of natural laws with respect to different
observers. 

By putting forward our point-view (we hope it is not so controversial),
we would like to outline the technical aspect how to implement our
program which is based on the global version
the tetrad formalism, that is, the frame bundle over a space-time. The tetrad formalism has been developed
into a powerful tool (cf.
\citep{Penrose1960,NewmanPenrose1962,Penrose1966,NewmanPenrose1968,PenroseMacCallum1972},
and \citep{HawkingEllis1973,PenroseRindler1984,Chandrasekhar1992}). While the tetrad
formalism has been used as a powerful computational tool, but has
not been used, to the best knowledge of the present author, as a tool
to advance new scientific concepts. 
Technically we introduce the concept of the
space-time structure which consists of the space-time continuum equipped with
a gravitational field and a time-oriented chat system. We construct
the orthonormal bundle $O^{\uparrow}(\mathscr{M})$ over the space-time
structure whose structure group being the proper Lorentz group. $O^{\uparrow}(\mathscr{M})$
is the space of events and reference frames, which has ten dimensions, perhaps by chance. We
then reveal a new invariance property of the Dirac equation introduced
originally by Dirac, which opens the route towards a construction
of Dirac operators on curved space-times. The class of Dirac operators are invariant under
transformations preserving the space-time structure and satisfy the
general relativity principle. Finally the general field equations which
involve both the gravitational field and various matter fields
may be formulated.

\section{The notion of oriented space-times}

By including observers in space-time structures, a concept of oriented space-times may be introduced naturally. 
Let
$\mathscr{M}$ be a manifold of four dimensions equipped
with a gravitational field $g$, a symmetric tensor field with signature $(1,-1,-1,-1)$. 
$g$ has to satisfy the Einstein equation
(cf. Einstein \citep{Einstein1916}, Hawking and Ellis \citep{HawkingEllis1973}),
which is however not relevant in this work. $(g_{\mu \nu})$ gives rise to a unique torsion-free metric connection, the Levi-Civita connection $\nabla$ 
defined by the Lorentz metric $g$. To take into account of the causal structure, the following constraint
is imposed on the differentiable structure of $\mathscr{M}$. There
is a time-orientated local coordinate system $(U_{i},\varphi_{i})$
on $\mathscr{M}$, where $\{U^{i}\}$ is an open cover of $\mathscr{M}$, such that for each $i$ the metric $(g_{\mu\nu})$ with respect to the
coordinate system $(x^{\mu})$ induced by $\varphi_{i}$ has the same
signature $(1,-1,-1,-1)$, in the sense that for every $p\in\mathscr{M}$,
there is a chart $(U_{i},\varphi_{i})$, so that the matrix $(g_{\mu\nu})$ at this point $p$ in this chart is $\textrm{diag}(1,-1,-1,-1)$.
If two charts $(U_{i},\varphi_{i})$ and $(U_{j},\varphi_{j})$
are overlap, on $V_{ij}=\varphi_{i}(U_{i})\bigcap\varphi_{j}(U_{j})$
the gravitational field $g$ is written as $g^{i}=(g_{\mu\nu}^{i})$
and $g^{j}=(g_{\mu\nu}^{j})$ in the chart $(U_{i},\varphi_{i})$
and $(U_{j},\varphi_{j})$ respectively, then $g^{j}=P^{-1}g^{i}P$
for some positive symmetric matrix valued function $P$
on $V_{ij}$. 

The triple of $\mathscr{M}$, $g$ and
the time-orientated charts $(U_{i},\varphi_{i})$ is called an \emph{oriented space-time}. In what follows, we shall work with a fixed oriented space-time
and shall use a specified time-orientated local
charts in our computations below.

To describe events and references over an oriented space-time, 
it is convenient to use a formulation of the frame bundle (cf. \citep{Kerner1981}and \citep{KobayashiNomizu-vol1}),
which is a global version of the tetrad formalism (cf. \citep{NewmanPenrose1962,NewmanPenrose1968}
). The frame bundle $L(\mathscr{M})$ of linear basis of $T\mathscr{M}$ over $\mathscr{M}$
is a principal fibre bundle with its structure group $\textrm{GL}(\mathbb{R}^{4})$,
and $\pi:L(\mathscr{M})\rightarrow\mathscr{M}$ the natural projection.
$L(\mathscr{M})$ is the collection of $(p;e_{0},e_{1},e_{2},e_{3})$
(denoted also by $(p;e_{\mu})$ or by $(p;e)$ if no confusion arises),
where $(e_{\mu})$ is a basis of $T_{p}\mathscr{M}$ and $p\in\mathscr{M}$. The system of local coordinate system $(x^{\mu})$
on $\mathscr{M}$ gives rise to local coordinates $(x^{\mu};x_{\;\nu}^{\sigma})$
on $L(\mathscr{M})$ determined as the following.
If $u=(p;e_{\mu})\in L(\mathscr{M})$, and $(x^{\mu})$ is a local
chart about $p\in\mathscr{M}$, then\footnote{In the tetrad formalism the $x_{\;\nu}^{\sigma}$ are chosen as
functions of $(x^{\mu})$, which are defined locally, so that $(e_{\mu})$
is a local orthonormal frame field.}
\[
e_{\mu}=\sum_{\nu}x_{\;\mu}^{\nu}\frac{\partial}{\partial x^{\nu}}\quad\textrm{ for }\mu=0,1,2,3.
\]
The matrix $(x_{\;\mu}^{\nu})\in\textrm{GL}(\mathbb{R}^{4})$, whose
inverse shall be denoted by $(x_{\nu}^{\;\mu})$. $\textrm{GL}(\mathbb{R}^{4})$
acts on $L(\mathscr{M})$ naturally:
\[
a:(p;e_{\mu})\mapsto(p;\sum_{\nu}a_{\;\mu}^{\nu}e_{\nu})
\]
for every $a=(a_{\;\mu}^{\nu})\in\textrm{GL}(\mathbb{R}^{4})$, which
preserves the fibres of $L(\mathscr{M})$: $\pi\circ a=\pi$. The
differential forms
\[
\theta^{\mu}=\sum_{\nu}x_{\nu}^{\;\mu}\textrm{d}x^{\nu}\quad\textrm{ for }\mu=0,1,2,3,
\]
are defined globally. The vertical space of $TL(\mathscr{M})$ is
defined to be the kernel of $\theta=(\theta^{\mu})$.

Under a local chart $(x^{\mu})$ on $\mathscr{M}$ and its induced
coordinate chart $(x^{\mu};x_{\;\sigma}^{\nu})$, the connection form
of the Levi-Civita connection $\nabla$ is given by 
\[
\omega_{\beta}^{\rho}=\varGamma_{\beta\mu}^{\rho}\textrm{d}x^{\mu}\quad\textrm{ and }\quad\varGamma_{\beta\mu}^{\rho}=\frac{1}{2}g^{\rho\alpha}\left(\frac{\partial g_{\beta\alpha}}{\partial x^{\mu}}+\frac{\partial g_{\mu\alpha}}{\partial x^{\beta}}-\frac{\partial g_{\beta\mu}}{\partial x^{\alpha}}\right),
\]
and define
\[
\theta_{\sigma}^{\nu}=x_{\rho}^{\;\nu}\left(\textrm{d}x_{\;\sigma}^{\rho}+x_{\;\sigma}^{\beta}\omega_{\beta}^{\rho}\right).
\]
Then $\theta^{\mu}$ and $\theta_{\sigma}^{\nu}$ together consist
of a global frame field of $T^{\star}L(\mathscr{M})$ and therefore
determine a global frame field of $TL(\mathscr{M})$. While in this work we
only need the horizontal fundamental vector fields of $L(\mathscr{M})$,
given by 
\[
L_{\mu}=x_{\;\mu}^{k}\frac{\partial}{\partial x^{k}}-\varGamma_{\alpha\beta}^{\rho}x_{\;\mu}^{\alpha}x_{\;\sigma}^{\beta}\frac{\partial}{\partial x_{\;\sigma}^{\rho}}\quad\textrm{ for }\mu=0,1,2,3.
\]
\begin{defn}
A function $f$ on $L(\mathscr{M})$ is vertical if $L_{\mu}f=0$
identically for $\mu=0,1,2,3$.
\end{defn}

There is a convenient way to introduce the fundamental fields $L_{\mu}$.
Let $X\in T_{p}\mathscr{M}$ be a tangent vector and $u=(p;e_{\mu})\in L(\mathscr{M})$.
Choose a curve $\phi:(-\varepsilon,\varepsilon)\mapsto\mathscr{M}$
(for some $\varepsilon>0$) such that $\phi(0)=p$ and $\dot{\phi}(0)=X$.
The horizontal lifting $\Phi$ of $\phi$ is the curve in $L(\mathscr{M})$:
$\Phi(t)=(\phi(t);e_{\mu}(t))$ (for $t\in(-\varepsilon,\varepsilon)$),
where $t\mapsto e_{\mu}(t)$
is the parallel translation of $e_{\mu}$ (for $\mu=0,1,2,3)$ along
the curve $\phi$ with respect to the Levi-Civita connection $\nabla$.
That is $t\mapsto e_{\mu}(t)$ are solutions to the ordinary differential
equations $\nabla_{\dot{\phi}(t)}e_{\mu}(t)=0$ for $t\in(-\varepsilon,\varepsilon)$
and $e_{\mu}(0)=e_{\mu}$ (where $\mu=0,1,2,3$). Then $\tilde{X}=\dot{\Phi}(0)\in T_{u}L(\mathscr{M})$
is called the horizontal lifting of $X\in T_{p}\mathscr{M}$ at $u\in L(\mathscr{M})$
with $\pi(u)=p$. In particular, for each $u=(p;e_{\mu})\in L(\mathscr{M})$,
and for each $\mu=0,1,2,3$, $e_{\mu}\in T_{\pi(u)}\mathscr{M}$ has
its horizontal lifting $\tilde{e}_{\mu}\in T_{p}\mathscr{M}$ at $u$. Then
$L_{\mu}(u)=\tilde{e}_{\mu}$
(for $\mu=0,1,2,3$) at $u=(p;e_{\mu})\in L(\mathscr{M})$.
\begin{lem}\label{lemma2.2}
Under a local coordinate system $(x^{\mu};x_{\;\sigma}^{\nu})$, it
holds that
\[
\left[L_{\mu},L_{\nu}\right]=x_{\;\mu}^{\alpha}x_{\;\nu}^{\beta}R_{\alpha\beta\tau}^{\rho}x_{\;\sigma}^{\tau}\frac{\partial}{\partial x_{\;\sigma}^{\rho}}
\]
where 
\[
R_{\mu\nu\rho}^{\sigma}=\frac{\partial}{\partial x^{\nu}}\varGamma_{\mu\rho}^{\sigma}-\frac{\partial}{\partial x^{\mu}}\varGamma_{\nu\rho}^{\sigma}+\varGamma_{\mu\rho}^{\alpha}\varGamma_{\alpha\nu}^{\sigma}-\varGamma_{\nu\rho}^{\alpha}\varGamma_{\alpha\mu}^{\sigma}
\]
is the curvature tensor. In particular $\left[L_{\mu},L_{\nu}\right]$
are vertical.
\end{lem}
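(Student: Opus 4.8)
The statement is local, so the plan is to work entirely in the induced chart $(x^{\mu};x_{\;\sigma}^{\nu})$ and to decompose each horizontal field into its ``translation part'' and its ``connection part'': write $L_{\mu}=A_{\mu}-B_{\mu}$, where
\[
A_{\mu}=x_{\;\mu}^{k}\frac{\partial}{\partial x^{k}},\qquad B_{\mu}=\varGamma_{\alpha\beta}^{\rho}\,x_{\;\mu}^{\alpha}\,x_{\;\sigma}^{\beta}\,\frac{\partial}{\partial x_{\;\sigma}^{\rho}} .
\]
Then $[L_{\mu},L_{\nu}]=[A_{\mu},A_{\nu}]-[A_{\mu},B_{\nu}]-[B_{\mu},A_{\nu}]+[B_{\mu},B_{\nu}]$, and I would evaluate the four pieces in turn. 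The coefficients $x_{\;\mu}^{k}$ of $A_{\mu}$ are fibre coordinates and are killed by every $\partial/\partial x^{l}$, so $[A_{\mu},A_{\nu}]=0$ at once. For the other three brackets the only inputs needed are the elementary identity $\partial x_{\;\mu}^{k}/\partial x_{\;\sigma}^{\rho}=\delta_{\rho}^{k}\delta_{\mu}^{\sigma}$ and the observation that in $B_{\nu}$ only the Christoffel symbols depend on the base point, so that $A_{\mu}$ acting on them produces the directional derivative $x_{\;\mu}^{k}\,\partial\varGamma/\partial x^{k}$ along the frame vector $e_{\mu}$; the second-order derivatives cancel as in every Lie bracket.

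Carrying this out, the combination $-[A_{\mu},B_{\nu}]-[B_{\mu},A_{\nu}]$ falls into two blocks. The $\partial/\partial x_{\;\sigma}^{\rho}$-block, after the dummy indices are relabelled, is $x_{\;\mu}^{\alpha}x_{\;\nu}^{\beta}x_{\;\sigma}^{\tau}\big(\partial_{\beta}\varGamma_{\alpha\tau}^{\rho}-\partial_{\alpha}\varGamma_{\beta\tau}^{\rho}\big)\,\partial/\partial x_{\;\sigma}^{\rho}$, i.e. precisely the derivative part of the curvature tensor. The $\partial/\partial x^{k}$-block carries the coefficient $\big(\varGamma_{\alpha\beta}^{k}-\varGamma_{\beta\alpha}^{k}\big)x_{\;\nu}^{\alpha}x_{\;\mu}^{\beta}$, and here is the one genuinely substantive point: because $\nabla$ is the Levi-Civita connection it is torsion free, $\varGamma_{\alpha\beta}^{k}=\varGamma_{\beta\alpha}^{k}$, and the whole $\partial/\partial x^{k}$-block vanishes identically (for a connection with torsion this block, carrying the torsion of the frame vectors, would survive and $[L_{\mu},L_{\nu}]$ would fail to be vertical). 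Next, in $[B_{\mu},B_{\nu}]$ both fields have only $\partial/\partial x_{\;\sigma}^{\rho}$-components, so the bracket has only such components; computing $B_{\mu}$ on the coefficient functions of $B_{\nu}$, using $\partial x_{\;\sigma}^{\beta}/\partial x_{\;\lambda}^{\rho}=\delta_{\rho}^{\beta}\delta_{\sigma}^{\lambda}$ and the symmetry of $\varGamma$ once more to discard the symmetric remainders, leaves the quadratic terms $x_{\;\mu}^{\alpha}x_{\;\nu}^{\beta}x_{\;\sigma}^{\tau}\big(\varGamma_{\alpha\tau}^{\lambda}\varGamma_{\lambda\beta}^{\rho}-\varGamma_{\beta\tau}^{\lambda}\varGamma_{\lambda\alpha}^{\rho}\big)\,\partial/\partial x_{\;\sigma}^{\rho}$. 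Adding the two surviving blocks reconstitutes $x_{\;\mu}^{\alpha}x_{\;\nu}^{\beta}R_{\alpha\beta\tau}^{\rho}x_{\;\sigma}^{\tau}\,\partial/\partial x_{\;\sigma}^{\rho}$ with $R$ as in the statement.

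Verticality of $[L_{\mu},L_{\nu}]$ is then immediate, since the expression is a linear combination of the fields $\partial/\partial x_{\;\sigma}^{\rho}$, each of which annihilates every vertical function and lies in $\ker\theta$ (the canonical forms $\theta^{\mu}=x_{\nu}^{\;\mu}\,\mathrm{d}x^{\nu}$ involve no $\mathrm{d}x_{\;\sigma}^{\rho}$), i.e. in the vertical subspace of $TL(\mathscr{M})$. A coordinate-free route could be used instead: at $u=(p;e)$ one has $L_{\mu}(u)=$ horizontal lift of $e_{\mu}=u(\epsilon_{\mu})$, so each $L_{\mu}$ is the standard horizontal vector field attached to the fixed vector $\epsilon_{\mu}\in\mathbb{R}^{4}$; feeding $(L_{\mu},L_{\nu})$ into the torsion-free first structure equation $\mathrm{d}\theta=-\omega\wedge\theta$ gives $\theta([L_{\mu},L_{\nu}])=0$, and into $\mathrm{d}\omega=\Omega-\tfrac{1}{2}[\omega,\omega]$ gives $\omega([L_{\mu},L_{\nu}])=-\Omega(L_{\mu},L_{\nu})$, the invariant form of the identity. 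I expect the only real difficulty to be clerical: keeping the four summation indices in $B_{\mu}$, $B_{\nu}$ and in each bracket aligned so that the $\partial\varGamma-\partial\varGamma$ and $\varGamma\varGamma-\varGamma\varGamma$ pieces land with exactly the index placement of $R_{\alpha\beta\tau}^{\rho}$; the only conceptual input is the torsion-free hypothesis, used to kill the $\partial/\partial x^{k}$-block and to clean up the quadratic terms.
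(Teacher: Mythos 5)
Your computation is correct and complete. Note that the paper itself states Lemma \ref{lemma2.2} without any proof (it is the standard identity for the bracket of standard horizontal vector fields, essentially the second structure equation of \citep{KobayashiNomizu-vol1} written in the induced chart), so your argument supplies the missing verification rather than paralleling an existing one. The decomposition $L_{\mu}=A_{\mu}-B_{\mu}$, the vanishing of the $\partial/\partial x^{k}$-block by torsion-freeness, and the assembly of the $\partial_{\beta}\varGamma^{\rho}_{\alpha\tau}-\partial_{\alpha}\varGamma^{\rho}_{\beta\tau}$ and $\varGamma^{\lambda}_{\alpha\tau}\varGamma^{\rho}_{\lambda\beta}-\varGamma^{\lambda}_{\beta\tau}\varGamma^{\rho}_{\lambda\alpha}$ pieces all check out against the index convention for $R^{\sigma}_{\mu\nu\rho}$ given in the statement, and your verticality argument correctly uses the paper's definition of the vertical subspace as $\ker\theta$. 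One small point worth making explicit: the identity $\partial x^{k}_{\;\mu}/\partial x^{\rho}_{\;\sigma}=\delta^{k}_{\rho}\delta^{\sigma}_{\mu}$ treats the fibre coordinates as independent, which is legitimate because the computation lives on $L(\mathscr{M})$; the fields $L_{\mu}$ are then tangent to $O^{\uparrow}(\mathscr{M})$ (as the paper shows separately), so the bracket formula restricts there.
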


The most important component of the spacetime structure is the sub-bundle
of the frame bundle $L(\mathscr{M})$ with the proper Lorentz group
as its structure group. Let $(\eta_{\mu\nu})$
be the Minkowski metric on the four dimensional Euclidean space $\mathbb{R}^{4}$.
The connected component of the Lorentz group $\textrm{SO}(1,3)$ at
its unity is a sub-group, denoted by $\mathscr{L}_{0}$ (following
A. Messiah \citep[page 882]{MessiahII}), consisting of all Lorentz
matrix $\varLambda=(\varLambda_{\;\mu}^{\nu})$ where $\det\varLambda=1$
and $\varLambda_{\;0}^{0}\geq1$. Let $O^{\uparrow}(\mathscr{M})$
denote the sub-bundle of the frame bundle $L(\mathscr{M})$ consisting
of all frames $(p;e_{\mu})\in L(\mathscr{M})$ where $p\in\mathscr{M}$
and $e=(e_{0},e_{1},e_{2},e_{3})$ is a linear basis of $T_{p}\mathscr{M}$
such that $\left\langle e_{\mu},e_{\nu}\right\rangle =\eta_{\mu\nu}$.
The proper Lorentz group $\mathscr{L}_{0}$ has a natural action on
$O^{\uparrow}(\mathscr{M})$. Namely, if $\varLambda=(\varLambda_{\;\mu}^{\nu})$
is an element in $\mathscr{L}_{0}$, then the action of $\varLambda$
on $O^{\uparrow}(\mathscr{M})$ gives rise to a fibre transformation:
\[
(p;e)\mapsto (p;\varLambda e)\quad\textrm{ with }(\varLambda e)_{\mu}=\varLambda_{\;\mu}^{\nu}e_{\nu}
\]
for $\mu=0,1,2,3$, so that $\pi:O^{\uparrow}(\mathscr{M})\mapsto \mathscr{M}$
is a principal fibre bundle with its structure group $\mathscr{L}_{0}$.
It is interesting to note that $O^{\uparrow}(\mathscr{M})$ is a
differentiable manifold of \emph{ten dimensions}. 

If $X\in T_{p}\mathscr{M}$ and $u=(p;e_{\mu})\in O^{\uparrow}(\mathscr{M})$,
then its horizontal lifting $\tilde{X}\in T_{u}O^{\uparrow}(\mathscr{M})$.
Indeed if $\phi:(-\varepsilon,\varepsilon)\mapsto\mathscr{M}$ with
$\phi(0)=p$, and $\Phi(t)=(\phi(t);e_{\mu}(t))$ be its horizontal
lifting at $u=(p;e_{\mu})$, then
\begin{align*}
\frac{\textrm{d}}{\textrm{d}t}g(e_{\mu}(t),e_{\nu}(t)) & =g\left(\nabla_{\dot{\phi}(t)}e_{\mu}(t),e_{\nu}(t)\right)+g\left(e_{\mu}(t),\nabla_{\dot{\phi}(t)}e_{\nu}(t)\right)\\
 & =0
\end{align*}
which implies that $\Phi(t)\in O^{\uparrow}(\mathscr{M})$ for every
$t\in(-\varepsilon,\varepsilon)$, and therefore $\tilde{X}=\dot{\Phi}(0)\in T_{u}O^{\uparrow}(\mathscr{M})$.
In particular, the fundamental horizontal fields $L_{\mu}$ are vector
fields on $O^{\uparrow}(\mathscr{M})$. By definition
\[
\tilde{X}f(u)=\left.\frac{d}{dt}\right|_{t=0}f\circ\Phi(t)
\]
for a scalar function on $O^{\uparrow}(\mathscr{M})$.

\begin{defn}
An automorphism $F:\mathscr{M}\mapsto\mathscr{M}$
is said to preserve the time-oriented space-time structure, if the following conditions
are satisfied.

\vskip0.2truecm

1) $F$ is a diffeomorphism of $\mathscr{M}$, and $F$ leaves the
metric tensor $g$ invariant, that is, $F^{\star}g=g\circ F^{-1}$,
where $F^{\star}$ denotes the differential mapping induced by $F$.

\vskip0.2truecm

2) $F$ preserves the time-orientation in the following sense. Suppose
$(p;e)\in O^{\uparrow}(\mathscr{M})$, $(F(p),F_{\star}e)$ belongs
to $O^{\uparrow}(\mathscr{M})$ too, where $F_{\star}$ is the induced
tangent mapping and $(F_{\star}e)_{\mu}=F_{\star}e_{\mu}$ for $\mu=0,1,2,3$.
Hence $F$ induces a mapping, denoted by $F_{\flat}$, from $O^{\uparrow}(\mathscr{M})$
onto $O^{\uparrow}(\mathscr{M})$ which preserves the fibres of $O^{\uparrow}(\mathscr{M})$.
\end{defn}

\begin{lem}
Suppose $F:\mathscr{M}\mapsto\mathscr{M}$ is an automorphism
preserving the time-oriented space-time structure, and $F_{\flat}$ is the induced
mapping on $O^{\uparrow}(\mathscr{M})$. Let $X\in T_{p}\mathscr{M}$,
so that $F_{\star}X\in T_{q}\mathscr{M}$ where $q=F(p)$. Let $u=(p;e_{\mu})$
be an element of $O^{\uparrow}(\mathscr{M})$ and $F_{\flat}u=(q;F_{\star}e_{\mu})$.
If $f$ is a scalar function on $O^{\uparrow}(\mathscr{M})$,
then 
\[
(\widetilde{F_{\star}X}f)(F_{\flat}u)=(\tilde{X}f\circ F_{\flat})(u).
\]
\end{lem}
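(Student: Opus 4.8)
The plan is to reduce the identity to the naturality of parallel transport under the isometry $F$. First I would unravel both sides using the definition $\tilde X f(u) = \left.\frac{d}{dt}\right|_{t=0} f\circ\Phi(t)$ given just above the statement. Pick a curve $\phi:(-\varepsilon,\varepsilon)\to\mathscr M$ with $\phi(0)=p$, $\dot\phi(0)=X$, and let $\Phi(t)=(\phi(t);e_\mu(t))$ be its horizontal lift at $u=(p;e_\mu)$, so that $e_\mu(t)$ solves $\nabla_{\dot\phi(t)}e_\mu(t)=0$. On the right-hand side, $(\tilde X f\circ F_\flat)(u) = \left.\frac{d}{dt}\right|_{t=0} (f\circ F_\flat)(\Phi(t)) = \left.\frac{d}{dt}\right|_{t=0} f\bigl(F_\flat\Phi(t)\bigr)$, and by the definition of $F_\flat$ this equals $\left.\frac{d}{dt}\right|_{t=0} f\bigl(F(\phi(t)); F_\star e_\mu(t)\bigr)$. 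On the left-hand side, the curve $F\circ\phi$ has $F\circ\phi(0)=q$ and $\frac{d}{dt}\big|_{0}(F\circ\phi)=F_\star X$; its horizontal lift at $F_\flat u=(q;F_\star e_\mu)$ is some curve $t\mapsto (F(\phi(t)); \bar e_\mu(t))$ with $\bar e_\mu(0)=F_\star e_\mu$ and $\nabla_{\frac{d}{dt}(F\circ\phi)}\bar e_\mu(t)=0$, and $(\widetilde{F_\star X}f)(F_\flat u) = \left.\frac{d}{dt}\right|_{0} f\bigl(F(\phi(t)); \bar e_\mu(t)\bigr)$.

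The two expressions will coincide once I show $\bar e_\mu(t) = F_\star e_\mu(t)$ for all $t$, i.e. that the image under $F_\star$ of a $\nabla$-parallel frame along $\phi$ is the $\nabla$-parallel frame along $F\circ\phi$. This is exactly where condition 1) of the definition of a structure-preserving automorphism enters: since $F$ is an isometry, $F^\star g = g$, so $F_\star$ intertwines the Levi-Civita connections, $F_\star(\nabla_Y Z) = \nabla_{F_\star Y}(F_\star Z)$ for vector fields $Y,Z$ (uniqueness of the torsion-free metric connection applied to the pulled-back connection). Applying this along $\phi$ to the parallel vector fields $e_\mu(t)$ gives $\nabla_{\frac{d}{dt}(F\circ\phi)}(F_\star e_\mu(t)) = F_\star\bigl(\nabla_{\dot\phi(t)} e_\mu(t)\bigr) = 0$, and $F_\star e_\mu(0) = F_\star e_\mu = \bar e_\mu(0)$; by uniqueness of solutions to the parallel-transport ODE, $F_\star e_\mu(t) = \bar e_\mu(t)$. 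Condition 2) guarantees that $F_\flat$ is well defined, so that the horizontal lift of $F\circ\phi$ starting at $F_\flat u$ actually lives in $O^{\uparrow}(\mathscr M)$ and the evaluations of $f$ make sense.

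The main obstacle is the intertwining property $F_\star\circ\nabla = \nabla\circ F_\star$ — everything else is bookkeeping with the definitions. It is not hard conceptually, but it must be stated carefully: one pulls back $\nabla$ by $F$, checks that $F^\star\nabla$ is torsion-free and compatible with $F^\star g = g$, and then invokes the uniqueness clause of the fundamental theorem of Riemannian (here pseudo-Riemannian) geometry to conclude $F^\star\nabla = \nabla$, which is another way of writing the intertwining identity. A minor subtlety worth a sentence is that $\frac{d}{dt}(F\circ\phi)(t) = F_\star\dot\phi(t)$, so that the covariant derivative along $F\circ\phi$ of $F_\star e_\mu(t)$ is literally $F_\star$ applied to the covariant derivative along $\phi$ of $e_\mu(t)$; once this is in place the chain of equalities closes and both sides equal $\left.\frac{d}{dt}\right|_{0} f\bigl(F(\phi(t)); F_\star e_\mu(t)\bigr)$.
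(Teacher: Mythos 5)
Your proposal is correct and follows essentially the same route as the paper: lift $\phi$ horizontally, push the lifted curve forward by $F_\flat$, observe that $F_\star e_\mu(t)$ remains parallel because $F$ preserves $g$ (hence the Levi-Civita connection), and conclude that $F_\flat\Phi$ is the horizontal lift of $F\circ\phi$ at $F_\flat u$, so both sides equal $\left.\frac{\textrm{d}}{\textrm{d}t}\right|_{t=0}f\bigl(F(\phi(t));F_\star e_\mu(t)\bigr)$. You simply spell out the intertwining step $F_\star\circ\nabla=\nabla\circ F_\star$ (via uniqueness of the torsion-free metric connection) that the paper asserts in one clause.
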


The horizontal lift of $X$ at $u$ is $\tilde{X}$ (with respect
to the Levi-Civita connection $\nabla$). We want to know what is the horizontal
lift of $F_{\star}X$ at $F_{\flat}u$. Let $\phi:(-\varepsilon,\varepsilon)\mapsto \mathscr{M}$
such that $\phi(0)=p$ and $\dot{\phi}(0)=X$. Let $\Phi(t)=(\phi(t);e_{\mu}(t))$,
where $e_{\mu}(t)$ is the parallel translation of $e_{\mu}$ at $u$
along the curve $\phi$ so that $\nabla_{\dot{\phi}(t)}e_{\mu}(t)=0$
for $t\in(-\varepsilon,\varepsilon)$. Let $\theta(t)=F(\phi(t))$
and $\Theta(t)=F_{\flat}\Phi(t)=(\theta(t);F_{\star}e_{\mu}(t))$.
Then 
\[
\nabla_{\dot{\theta}(t)}F_{\star}e_{\mu}(t)=0
\]
as $F$ preserves the gravitational field $(g_{\mu\nu})$. The previous
equality implies that $\Theta(t)$ is the horizontal lift of $\theta$,
which implies that $\widetilde{F_{\star}X}=\dot{\Theta}(0)\in T_{F_{\flat}u}O^{\uparrow}(\mathscr{M})$.
By definition
\begin{align*}
(\widetilde{F_{\star}X}f)(F_{\flat}u) & =\left.\frac{\textrm{d}}{\textrm{d}t}\right|_{t=0}f\left(\Theta(t)\right)\\
 & =\left.\frac{\textrm{d}}{\textrm{d}t}\right|_{t=0}f\left(F_{\flat}(\Phi(t)\right)\\
 & =(\tilde{X}f\circ F_{\flat})(u).
\end{align*}

\vskip0.3truecm

As a consequence, we have the following fact.
\begin{lem}
Suppose $F:\mathscr{M}\mapsto\mathscr{M}$ is an automorphism
preserving the time-oriented space-time structure. If $f:O^{\uparrow}(\mathscr{M})\mapsto\mathbb{R}$
satisfying that $\tilde{X}f=0$ for any tangent vector $X\in T\mathscr{M}$.
Then $\tilde{X}f\circ F_{\flat}=0$ for any horizontal vector field
$\tilde{X}$.
\end{lem}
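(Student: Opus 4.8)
\section*{Proof proposal}

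The plan is to obtain this as an immediate corollary of the preceding lemma, the only extra ingredient being that $F_{\star}$ is fibrewise bijective on tangent spaces. First I would observe that condition 2) in the definition of an automorphism preserving the time-oriented space-time structure guarantees that $F_{\flat}$ sends $O^{\uparrow}(\mathscr{M})$ into $O^{\uparrow}(\mathscr{M})$, so that the composition $f\circ F_{\flat}$ is a well-defined scalar function on $O^{\uparrow}(\mathscr{M})$ and the statement makes sense. I would also record that the hypothesis $\tilde{X}f=0$ for every $X\in T\mathscr{M}$ is precisely the statement that $f$ is vertical in the sense defined above, since at each $u=(p;e_{\mu})$ one has $L_{\mu}(u)=\tilde{e}_{\mu}$ and the $(e_{\mu})$ form a basis of $T_{p}\mathscr{M}$.

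Next I would fix $u=(p;e_{\mu})\in O^{\uparrow}(\mathscr{M})$ and an arbitrary $Y\in T_{F(p)}\mathscr{M}$. Since $F$ is a diffeomorphism, the induced linear map $F_{\star}\colon T_{p}\mathscr{M}\to T_{F(p)}\mathscr{M}$ is an isomorphism, so $Y=F_{\star}X$ for a unique $X\in T_{p}\mathscr{M}$. Applying the preceding lemma to this $X$ gives
\[
(\tilde{Y}f)(F_{\flat}u)=\bigl(\tilde{X}(f\circ F_{\flat})\bigr)(u).
\]
By hypothesis $\tilde{Y}f=0$ (the hypothesis holds for \emph{every} tangent vector, in particular for $Y$), so the left-hand side vanishes, whence $\bigl(\tilde{X}(f\circ F_{\flat})\bigr)(u)=0$. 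Because $u\in O^{\uparrow}(\mathscr{M})$ and $X\in T_{\pi(u)}\mathscr{M}$ were arbitrary, and because the horizontal fundamental fields $L_{\mu}$ span the horizontal subspace of $T_{u}O^{\uparrow}(\mathscr{M})$ at every $u$ with $X\mapsto\tilde{X}$ linear, I conclude $L_{\mu}(f\circ F_{\flat})=0$ identically for $\mu=0,1,2,3$, i.e.\ $\tilde{X}(f\circ F_{\flat})=0$ for every horizontal vector field $\tilde{X}$ (these being $C^{\infty}$-combinations of the $L_{\mu}$). Equivalently, $f\circ F_{\flat}$ is again vertical.

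There is essentially no obstacle here: the analytic content, namely that horizontal lifts are intertwined by $F_{\flat}$, was already established in the preceding lemma using that $F$ preserves the Levi-Civita connection (a consequence of $F^{\star}g=g$). The only point deserving a word of care is the passage from the pointwise identity to the statement about arbitrary horizontal vector fields, which is immediate once one recalls that the $L_{\mu}$ span the horizontal subspace at each point and that the horizontal lift depends linearly on the tangent vector. Hence the proof reduces to invoking the previous lemma together with the surjectivity of $F_{\star}$.
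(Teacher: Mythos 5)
Your proposal is correct and matches the paper's intent: the paper states this lemma as an immediate consequence of the preceding one, $(\widetilde{F_{\star}X}f)(F_{\flat}u)=(\tilde{X}(f\circ F_{\flat}))(u)$, and your argument is exactly that application, with the verticality hypothesis killing the left-hand side and the bijectivity of $F_{\star}$ giving the conclusion for all horizontal directions. The only addition you make beyond the paper is spelling out the (routine) passage from the pointwise identity on the $L_{\mu}$ to arbitrary horizontal vector fields, which is harmless.
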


\section{Dirac operators on curved space-times}

 $\gamma=(\gamma^{\mu})$ is a representation of
gamma matrices, if $\gamma^{\mu}$ are elements
in $\textrm{GL}(\mathbb{R}^{4})$ such that
\begin{equation}
\gamma^{\mu}\gamma^{\nu}+\gamma^{\nu}\gamma^{\mu}=2\eta^{\mu\nu}I\label{XX.66}
\end{equation}
for $\mu,\nu=0,1,2,3$, where $I$ is the identity matrix. According
to Pauli \citep{Pauli1936}, if $(\gamma^{\mu})$ and $(\gamma'^{\mu})$
are two representations of gamma matrices, then there is
an invertible matrix $S$, unique up to a non-zero constant, such that
$\gamma^{\mu}=S\gamma'^{\mu}S^{-1}$ for $\mu=0,1,2,3$. On the other
hand if $(\gamma^{\mu})$ is a representation of
gamma matrices and $\varLambda=(\varLambda_{\;\nu}^{\mu})$ is
a Lorentz matrix, then $\hat{\gamma}^{\mu}=\varLambda_{\;\nu}^{\mu}\gamma^{\nu}$
is also a representation of gamma matrices, hence there
is an invertible matrix $S(\varLambda)$ unique up to a non-negative
constant, such that $\varLambda_{\;\nu}^{\mu}\gamma^{\nu}=S(\varLambda)^{-1}\gamma^{\mu}S(\varLambda)$
for $\mu=0,1,2,3$. 

Let $\mathscr{G}$ denote the collection of all representations of gamma matrices,
so that $\mathscr{G}\subset\textrm{GL}(\mathbb{R}^{4})^{4}$. 

Let $M$ together with $\eta=(\eta^{\mu\nu})$ be the spacetime in
special relativity. The Dirac equation for 1/2 spin particles with
mass $m$ may be written in terms of a specified gamma matrices
representation $\tilde{\gamma}=(\tilde{\gamma}^{\mu})$ as the first
order differential equation
\[
\left[\textrm{i}\tilde{\gamma}^{\mu}D_{\mu}-m\right]\varPsi=0
\]
where $D_{\mu}=\partial_{\mu}+\textrm{i}\tilde{A}_{\mu}$, $\tilde{A}=(\tilde{A}_{\mu})$
is a connection form, i.e. a differential form of first order (or
equivalently a vector field) on $M$. Here $\varPsi$ is a 4 component
complex function, i.e. a section of $\mathbb{C}^{4}\otimes M$. 
It should be stressed that the gamma matrices $\tilde{\gamma}=(\tilde{\gamma}^{\mu})$
are constant matrices. 

In the special theory of relativity, the group of isometrics of the oriented
space-time structure can be identified with the proper Poincar\'e group,
the group generated by the proper Lorentz group $\mathscr{L}_{0}$
and translations. We shall reveal another invariance property of the
Dirac equation, which in fact motivate the approach in the paper.

Let $u=(x;e)\in O^{\uparrow}(M)$, so that
$e=(e_{\mu})$ with $e_{\nu}=e_{\;\nu}^{\mu}\partial_{\mu}$ is a reference frame at $x$. 
$(e_{\;\nu}^{\mu})$ is an element of the proper Lorentz group $\mathscr{L}_{0}$.
The fundamental horizontal vector fields $L_{\mu}(p;e)=e_{\;\mu}^{\nu}\partial_{\nu}$,
or equivalently $\partial_{\mu}=e_{\mu}^{\;\nu}L_{\nu}$, so that
the Dirac operator can be rewritten as
\begin{equation}
\mathscr{D}=\gamma^{\mu}(L_{\mu}+\textrm{i}A_{\mu})\label{Dirac-Lifting}
\end{equation}
where
\[
\gamma^{\mu}(u)=e_{\rho}^{\;\mu}\tilde{\gamma}^{\rho}\quad\textrm{ and }\quad A_{\mu}(u)=e_{\;\mu}^{\nu}\tilde{A}_{\nu}\quad\textrm{ for }u=(x;e)\in O^{\uparrow}(M).
\]
Since $(e_{\rho}^{\;\mu})$ is a proper Lorentz matrix, so that $u\mapsto\gamma(u)=\left(\gamma^{\mu}(u)\right)$
is a section of $\mathscr{G}\otimes O^{\uparrow}(M)$, and $u\mapsto A(u)$
is a section of $\textrm{Hom}(O^{\uparrow}(M),TM)$. Our key observation
is that, although $u\mapsto\gamma(u)$ depends on the reference frame
$u$ and is no longer constant, but it is vertical in the sense that
$L_{\mu}\gamma=0$ identically for all $\mu$. Therefore we shall take (\ref{Dirac-Lifting})
as the definition of the Dirac operator with a given vertical section
$\gamma$ of $\mathscr{G}\otimes O^{\uparrow}(M)$ 
and a section $A$ of $\textrm{Hom}(O^{\uparrow}(M),TM)$, 
where $A_{\mu}$ is determined by writing $A(u)=\sum_{\mu}A^{\mu}(u)e_{\mu}$
for every $u=(p;e)\in O^{\uparrow}(M)$, operating on 4 component
scalar fields on $O^{\uparrow}(M)$. 

The Lorentz invariance of Dirac operators may be generalised. Indeed  one can verify for the case of special relativity that the class of the Dirac operators $\mathscr{D}_{[\gamma,A]}$
is invariant under Poinar\'e transformations. What we are going to demonstrate is that this invariance is still valid in the context of general relativity.

We are now in a position to define Dirac operators over a spacetime
with a given oriented space-time structure $\mathscr{M}$. 
For simplicity, we define the Dirac operator
with the gauge group $U(1)$, i.e. classical fields for particles
moving in a field, 
in terms of two given fields. One is a vector field which describes
the interaction in question. In our setting, it will be described
by a section $A$ of the bundle $O^{\uparrow}(\mathscr{M})\otimes T^{\star}\mathscr{M}$,
 identified with a mapping from $O^{\uparrow}(\mathscr{M})$
to $T\mathscr{M}$ such that $A(u)\in T_{\pi(u)}\mathscr{M}$ for every
$u\in O^{\uparrow}(\mathscr{M})$. At each $u=(p;e)\in O^{\uparrow}(\mathscr{M})$,
we may write $A(u)=\sum_{\mu}A^{\mu}(u)e_{\mu}$. The components
$A^{\mu}(u)$ are therefore defined. Let $A_{\mu}(u)=\eta_{\mu v}A^{\nu}(u)$. 
\begin{rem}
Here we are departing from the convention employed in the classical
quantum mechanics. In Dirac original approach, $A$ is a vector field
(or one form) and $A_{\mu}$ are components with respect to the coordinate
partials by fixing a coordinate chart.
\end{rem}

The second ingredient is a section $\gamma=(\gamma^{\mu})$ of the
product bundle $\mathscr{G}\otimes O^{\uparrow}(\mathscr{M})$, i.e.
a $\mathscr{G}$-valued section on $O^{\uparrow}(\mathscr{M})$, which
is vertical in the sense that $\tilde{X}\gamma^{\mu}=0$ identically
for $\mu=0,1,2,3,$ and for any horizontal vector field $\tilde{X}$
on $O^{\uparrow}(\mathscr{M})$.
\begin{defn}
Given a section $A$ of $O^{\uparrow}(\mathscr{M})\otimes T^{\star}\mathscr{M}$
and a vertical section $\gamma=(\gamma^{\mu})$ of $\mathscr{G}\otimes O^{\uparrow}(\mathscr{M})$,
the Dirac operator $\mathscr{D}_{\left[\gamma,A\right]}$ operating
on 4 component scalar fields
is defined as the following. Suppose $\varPsi=(\varPsi_{i})_{i=1,2,3,4}$
(as a column vector) is a 4-component (scalar) field on $O^{\uparrow}(\mathscr{M})$,
so that $\varPsi_{i}$ are complex valued functions on $O^{\uparrow}(\mathscr{M})$,
then
\begin{equation}
\mathscr{D}_{[\gamma,A]}\varPsi=\gamma^{\mu}\left(L_{\mu}+\textrm{i}A_{\mu}\right)\varPsi,\label{D-1}
\end{equation}
 where $L_{\mu}$ are the fundamental horizontal vector fields on
$O^{\uparrow}(\mathscr{M})$.
\end{defn}

The fact in the following lemma shows that our definition of Dirac operators satisfies the momentum condition, to an effect of reference correction, required by the general relativity. 
\begin{lem}
Given a section $A$ of $O^{\uparrow}(\mathscr{M})\otimes T^{\star}\mathscr{M}$
and a vertical section $\gamma=(\gamma^{\mu})$ of $\mathscr{G}\otimes O^{\uparrow}(\mathscr{M})$, 
the square of the Dirac operator
\begin{equation}
\mathscr{D}_{[\gamma,A]}^{2}=\square_{A}+\frac{1}{4}\left[\gamma^{\mu},\gamma^{\nu}\right]\left[L_{\mu},L_{\nu}\right]+\frac{\textrm{i}}{4}\left[\gamma^{\mu},\gamma^{\nu}\right]\left(L_{\mu}A_{\nu}-L_{\nu}A_{\mu}\right)\label{D-square}
\end{equation}
 operating on 4 component fields, where
\begin{equation}
\square_{A}=\eta^{\mu\nu}\left(L_{\mu}+\textrm{i}A_{\mu}\right)\left(L_{\nu}+\textrm{i}A_{\nu}\right).\label{DA-D1}
\end{equation}
\end{lem}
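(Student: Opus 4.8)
The plan is to expand $\mathscr{D}_{[\gamma,A]}^2 = \gamma^\mu(L_\mu + \mathrm{i}A_\mu)\gamma^\nu(L_\nu+\mathrm{i}A_\nu)$ directly and then symmetrise over the index pair $(\mu,\nu)$ using the Clifford relation \eqref{XX.66}. The crucial structural input is that $\gamma$ is vertical, i.e. $L_\mu \gamma^\nu = 0$ for all $\mu,\nu$; this is exactly what makes the $\gamma$'s behave like constants as far as the operators $L_\mu$ are concerned, so that $\gamma^\mu L_\mu(\gamma^\nu \Phi) = \gamma^\mu\gamma^\nu L_\mu \Phi$ for any 4-component field $\Phi$. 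Without verticality one would pick up extra terms $\gamma^\mu (L_\mu \gamma^\nu)(\cdots)$, so I would flag the use of verticality explicitly at that step.

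First I would write
\[
\mathscr{D}_{[\gamma,A]}^2\varPsi = \gamma^\mu\gamma^\nu (L_\mu + \mathrm{i}A_\mu)(L_\nu + \mathrm{i}A_\nu)\varPsi,
\]
justified by verticality of $\gamma$ (the $A_\mu$ are scalar functions and commute with the matrices $\gamma^\nu$, so they cause no trouble). Next I would split the product $\gamma^\mu\gamma^\nu$ into its symmetric and antisymmetric parts, $\gamma^\mu\gamma^\nu = \tfrac12\{\gamma^\mu,\gamma^\nu\} + \tfrac12[\gamma^\mu,\gamma^\nu] = \eta^{\mu\nu}I + \tfrac12[\gamma^\mu,\gamma^\nu]$. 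The symmetric part contracts against the operator $(L_\mu+\mathrm{i}A_\mu)(L_\nu+\mathrm{i}A_\nu)$, which is already symmetric in the sense needed, to give precisely $\square_A$ as defined in \eqref{DA-D1}. For the antisymmetric part, since $[\gamma^\mu,\gamma^\nu]$ is antisymmetric in $\mu\leftrightarrow\nu$, only the antisymmetric part of the operator $(L_\mu+\mathrm{i}A_\mu)(L_\nu+\mathrm{i}A_\nu)$ in $(\mu,\nu)$ survives the contraction; that antisymmetric part is $\tfrac12\big([L_\mu,L_\nu] + \mathrm{i}(L_\mu A_\nu - L_\nu A_\mu)\big)$, where I have used $[L_\mu, \mathrm{i}A_\nu] = \mathrm{i}(L_\mu A_\nu)$ (the $A$'s being functions) and the fact that the $\mathrm{i}A_\mu \cdot \mathrm{i}A_\nu$ term is symmetric and drops out. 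Assembling, $\tfrac12[\gamma^\mu,\gamma^\nu]\cdot\tfrac12\big([L_\mu,L_\nu]+\mathrm{i}(L_\mu A_\nu - L_\nu A_\mu)\big)$ yields the two remaining terms of \eqref{D-square} with the stated coefficient $\tfrac14$.

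The only genuinely delicate point is the bookkeeping in the antisymmetrisation: one must be careful that $(L_\mu + \mathrm{i}A_\mu)(L_\nu + \mathrm{i}A_\nu)$ is a composition of differential operators, so $L_\mu$ differentiates everything to its right including $A_\nu$ and the field $\varPsi$, and only after extracting $[L_\mu,L_\nu]$ (which by Lemma~\ref{lemma2.2} is vertical, hence a first-order operator with no $L$-part, acting only in the vertical directions) does the expression close up into the form claimed. I expect this index-juggling — correctly separating the second-order piece $\square_A$ from the commutator pieces without double-counting the cross terms $\mathrm{i}L_\mu A_\nu$ — to be the main obstacle, though it is routine once the symmetric/antisymmetric decomposition of $\gamma^\mu\gamma^\nu$ is in place. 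A remark worth including is that $[L_\mu,L_\nu]$ being vertical (Lemma~\ref{lemma2.2}) means the curvature term acts purely on the fibre coordinates of $O^\uparrow(\mathscr{M})$, which is what ultimately produces the spin-curvature coupling in the squared operator.
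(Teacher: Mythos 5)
Your proposal is correct and follows essentially the same route as the paper: pull the $\gamma^{\nu}$ through $(L_{\mu}+\mathrm{i}A_{\mu})$ using verticality (i.e. $[\gamma^{\nu},L_{\mu}]=0$), decompose $\gamma^{\mu}\gamma^{\nu}=\eta^{\mu\nu}I+\tfrac{1}{2}[\gamma^{\mu},\gamma^{\nu}]$, and let the antisymmetry of $[\gamma^{\mu},\gamma^{\nu}]$ extract $\tfrac{1}{2}[L_{\mu}+\mathrm{i}A_{\mu},L_{\nu}+\mathrm{i}A_{\nu}]=\tfrac{1}{2}\left([L_{\mu},L_{\nu}]+\mathrm{i}(L_{\mu}A_{\nu}-L_{\nu}A_{\mu})\right)$, yielding the $\tfrac{1}{4}$ coefficients. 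Your explicit flagging of where verticality is used matches the paper's appeal to $[\gamma^{\mu},L_{\nu}]=0$, so there is nothing to add.
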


The equality (\ref{D-square}) follows from the well-known identity
for gamma matrices: 
\[
\gamma^{\mu}\gamma^{\nu}=\frac{1}{2}\left[\gamma^{\mu},\gamma^{\nu}\right]+\eta^{\mu\nu}I
\]
and the assumption that $\gamma^{\mu}$ are vertical so that $\gamma^{\mu}$
and $L_{\nu}$ commute: $\left[\gamma^{\mu},L_{\nu}\right]=0$. Hence
\begin{align*}
\mathscr{D}_{[\gamma,A]}^{2}\varPsi & =\gamma^{\mu}\gamma^{\nu}\left(L_{\mu}+\textrm{i}A_{\mu}\right)\left(L_{\nu}+\textrm{i}A_{\nu}\right)\varPsi\\
 & =\eta^{\mu\nu}\left(L_{\mu}+\textrm{i}A_{\mu}\right)\left(L_{\nu}+\textrm{i}A_{\nu}\right)\varPsi\\
 & +\frac{1}{4}\left[\gamma^{\mu},\gamma^{\nu}\right]\left[L_{\mu}+\textrm{i}A_{\mu},L_{\nu}+\textrm{i}A_{\nu}\right]\varPsi
\end{align*}
and
\[
\left[L_{\mu}+\textrm{i}A_{\mu},L_{\nu}+\textrm{i}A_{\nu}\right]=\left[L_{\mu},L_{\nu}\right]+\textrm{i}\left(L_{\mu}A_{\nu}-L_{\nu}A_{\mu}\right).
\]

\begin{rem}
The last term on the right-hand side of \eqref{D-square} is the term
due to the electron spin. The second term 
\[
\frac{1}{4}\left[\gamma^{\mu},\gamma^{\nu}\right]\left[L_{\mu},L_{\nu}\right]
\]
appears as the first order correction to the Klein-Gordon operator
$\square_{A}$, which has several interesting features. Firstly, the
commutate $\left[L_{\mu},L_{\nu}\right]$ is always vertical (see Lemma \ref{lemma2.2}), hence
for classical Dirac fields $\varPsi$ (i.e. fields depending only on the
spacetime variable), then this term vanishes. Secondly, the term $\left[L_{\mu},L_{\nu}\right]$
describes the curvature of the spacetime, in particular, for Minkowski
(flat) space-time, this term vanishes identically.
\end{rem}

Given a vertical section $\gamma$ of $\mathscr{G}\otimes O^{\uparrow}(\mathscr{M})$
and a section $A$ of $O^{\uparrow}(\mathscr{M})\otimes T^{\star}\mathscr{M}$,
the Dirac equation for 1/2 spin particles with mass $m$ is, by definition,
the following wave equation:
\begin{equation}
\left[\textrm{i}\gamma^{\mu}\left(L_{\mu}+\textrm{i}A_{\mu}\right)-m\right]\varPsi=0\label{Dirac-eq1}
\end{equation}
for 4 component complex field $\varPsi$ on $O^{\uparrow}(\mathscr{M})$.

Let us investigate the transformation properties of the Dirac operators. 
\begin{lem}
Let $\varLambda=(\varLambda_{\;\nu}^{\mu})$ be a section of $\mathscr{L}_{0}\otimes O^{\uparrow}(\mathscr{M})$,
i.e. an $\mathscr{L}_{0}$-valued function on $O^{\uparrow}(\mathscr{M})$.
Suppose that $\varLambda$ is vertical. Then 
\[
(\mathscr{D}_{[\gamma,A]}\varPsi)\circ\varLambda=\mathscr{D}_{[\hat{\gamma},\hat{A}]}\varPsi\circ\varLambda
\]
at $u\in O^{\uparrow}(\mathscr{M})$, where
\[
\hat{\gamma}^{\mu}=\varLambda_{\;\nu}^{\mu}\gamma^{\nu}\circ\varLambda\quad\textrm{ and }\hat{A}_{\mu}=\varLambda_{\mu}^{\;\sigma}A_{\sigma}\circ\varLambda
\]
for $\mu=0,1,2,3$.
\end{lem}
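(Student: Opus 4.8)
The plan is to read each ``$\circ\varLambda$'' as composition with the fibre-preserving map $R_{\varLambda}\colon O^{\uparrow}(\mathscr{M})\to O^{\uparrow}(\mathscr{M})$ defined by $R_{\varLambda}(u)=(p;\varLambda(u)\cdot e)$ for $u=(p;e)$, where $\varLambda(u)\in\mathscr{L}_{0}$ acts through the principal action; in particular the right-hand side of the asserted identity is $\mathscr{D}_{[\hat{\gamma},\hat{A}]}(\varPsi\circ R_{\varLambda})$. Everything reduces to the behaviour of the fundamental horizontal fields $L_{\mu}$ under $R_{\varLambda}$. First I would record the case of a \emph{constant} $a\in\mathscr{L}_{0}$: using $L_{\mu}(u)=\tilde{e}_{\mu}$ (the horizontal lift at $u=(p;e)$ of the frame vector $e_{\mu}$) and the linearity of parallel transport, the horizontal lift of any $X\in T_{p}\mathscr{M}$ at $R_{a}(u)$ equals $(R_{a})_{\star}$ applied to its horizontal lift at $u$; applying this to the frame vectors $(a\cdot e)_{\mu}=a_{\;\mu}^{\nu}e_{\nu}$ of $R_{a}(u)$ gives $L_{\mu}(R_{a}(u))=a_{\;\mu}^{\nu}(R_{a})_{\star}L_{\nu}(u)$, equivalently $(R_{a})_{\star}L_{\rho}(u)=(a^{-1})_{\;\rho}^{\mu}L_{\mu}(R_{a}(u))$. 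For the non-constant $\varLambda$, write $R_{\varLambda}$ as $u\mapsto(\varLambda(u),u)$ followed by the action map; its differential at $u$ on a tangent vector $X$ is $(R_{\varLambda(u)})_{\star}X$ plus a vertical vector depending linearly on $(\textrm{d}\varLambda)_{u}(X)$. The hypothesis that $\varLambda$ is vertical means exactly $L_{\mu}\varLambda=0$, so $(\textrm{d}\varLambda)_{u}$ annihilates each $L_{\mu}(u)$ and the vertical correction drops out: $(R_{\varLambda})_{\star}L_{\mu}(u)=(R_{\varLambda(u)})_{\star}L_{\mu}(u)$. Evaluating the constant-$a$ formula at $a=\varLambda(u)$ and using $(\varLambda^{-1})_{\;\rho}^{\mu}=\varLambda_{\rho}^{\;\mu}$ yields the key identity
\[
L_{\mu}(h\circ R_{\varLambda})(u)=\varLambda_{\mu}^{\;\nu}(u)\,(L_{\nu}h)(R_{\varLambda}(u))
\]
for any scalar $h$ on $O^{\uparrow}(\mathscr{M})$, hence componentwise for the four components of $\varPsi$.

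The remainder is index bookkeeping. Substitute $\hat{\gamma}^{\mu}(u)=\varLambda_{\;\rho}^{\mu}(u)\gamma^{\rho}(R_{\varLambda}(u))$ and $\hat{A}_{\mu}(u)=\varLambda_{\mu}^{\;\sigma}(u)A_{\sigma}(R_{\varLambda}(u))$ into $\mathscr{D}_{[\hat{\gamma},\hat{A}]}(\varPsi\circ R_{\varLambda})=\hat{\gamma}^{\mu}(L_{\mu}+\textrm{i}\hat{A}_{\mu})(\varPsi\circ R_{\varLambda})$, apply the key identity to the derivative term, and collapse the repeated Lorentz factors using the relation $\varLambda_{\;\rho}^{\mu}\varLambda_{\mu}^{\;\nu}=\delta_{\rho}^{\nu}$ between a Lorentz matrix and its inverse: this gives $\hat{\gamma}^{\mu}(u)\varLambda_{\mu}^{\;\nu}(u)=\gamma^{\nu}(R_{\varLambda}(u))$ in the derivative term and $\hat{\gamma}^{\mu}(u)\hat{A}_{\mu}(u)=\gamma^{\rho}(R_{\varLambda}(u))A_{\rho}(R_{\varLambda}(u))$ in the potential term. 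The two contributions become $(\gamma^{\mu}L_{\mu}\varPsi)\circ R_{\varLambda}$ and $\textrm{i}(\gamma^{\mu}A_{\mu}\varPsi)\circ R_{\varLambda}$, whose sum is $(\mathscr{D}_{[\gamma,A]}\varPsi)\circ R_{\varLambda}$, as claimed. Along the way I would also note that $\hat{\gamma}$ is again a vertical $\mathscr{G}$-valued section (from verticality of $\gamma$ and $\varLambda$ together with the key identity) and that $\hat{A}=A\circ R_{\varLambda}$ is again a section of $O^{\uparrow}(\mathscr{M})\otimes T^{\star}\mathscr{M}$, so that $\mathscr{D}_{[\hat{\gamma},\hat{A}]}$ is legitimately defined.

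The main obstacle is the transformation of $L_{\mu}$ under the \emph{gauge} map $R_{\varLambda}$, as opposed to a fixed group element. For a general $\mathscr{L}_{0}$-valued function the push-forward of a horizontal vector acquires a genuine vertical component -- the inhomogeneous term familiar from gauge transformation laws -- and the clean covariance asserted here would be lost; it is precisely the verticality hypothesis $L_{\mu}\varLambda=0$ that kills this term. Isolating the vertical part of $(R_{\varLambda})_{\star}$ correctly, and checking that $\textrm{d}\varLambda$ vanishes on horizontal directions, is the one delicate point; the remaining manipulations are routine.
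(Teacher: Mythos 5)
Your proof is correct and follows essentially the same route as the paper's: both reduce the claim to the behaviour of the fundamental horizontal fields under the fibre map $u\mapsto(p;\varLambda(u)e)$ and then contract the Lorentz factors $\varLambda_{\;\nu}^{\mu}\varLambda_{\mu}^{\;\sigma}=\delta_{\nu}^{\sigma}$. The paper simply asserts the two key facts --- that $(L_{\mu}\varPsi\circ\varLambda)(p;e)=(L_{\mu}\varPsi)(p,\varLambda e)$ because $\varLambda$ is vertical, and the decomposition of the fundamental fields at $(p;\varLambda e)$ --- whereas you supply their justification (right-invariance of the horizontal distribution under a fixed group element plus the vanishing of $\mathrm{d}\varLambda$ on horizontal directions), which is indeed the one delicate point.
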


\begin{proof}
Let $u=(p;e)\in O^{\uparrow}(M)$. Then
\begin{align*}
(\mathscr{D}_{[\gamma,A]}\varPsi)_{k}(p,\varLambda e) & =(\gamma^{\mu})_{\;k}^{l}\varLambda_{\;\mu}^{\nu}(u)L_{\nu}\varPsi_{l}(p,\varLambda e)+\textrm{i}(\gamma^{\mu})_{\;k}^{l}A_{\mu}(p,\varLambda e)\varPsi_{l}(p,\varLambda e)\\
 & =(\varLambda_{\;\nu}^{\mu}\gamma^{\nu})_{\;k}^{l}L_{\mu}\varPsi_{l}(p,\varLambda e)+\textrm{i}(\varLambda_{\;\nu}^{\mu}\gamma^{\nu})_{\;k}^{l}\varLambda_{\mu}^{\;\sigma}A_{\sigma}(p,\varLambda e)\varPsi_{l}(p,\varLambda e).
\end{align*}
On the other hand, since $\varLambda$ is vertical, it holds that
\[
(L_{\mu}\varPsi\circ\varLambda)(p;e)=(L_{\mu}\varPsi)(p,\varLambda e)
\]
and therefore
\[
(\mathscr{D}_{[\hat{\gamma},\hat{A}]}\varPsi\circ\varLambda)_{k}(p,e)=\hat{\gamma}^{\mu}(p,e)\left[(L_{\mu}\varPsi)(p,\varLambda e)+\textrm{i}\hat{A}_{\mu}(p,e)\varPsi(p,\varLambda e)\right].
\]
The conclusion follows immediately.
\end{proof}
The most important issue is the invariance property of the Dirac operator.
\begin{lem}
Let $F:\mathscr{M}\rightarrow\mathscr{M}$ be an automorphism preserving
the oriented space-time structure. Then
\[
(\mathscr{D}_{[\gamma,A]}\varPsi)\circ F_{\flat}=\mathscr{D}_{[\hat{\gamma},\hat{A}]}\varPsi\circ F_{\flat}
\]
where 
\[
\hat{\gamma}^{\mu}=\gamma^{\mu}\circ F_{\flat}\quad\textrm{ and }\hat{A}_{\mu}=A_{\mu}\circ F_{\flat}.
\]
\end{lem}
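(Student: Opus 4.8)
The plan is to unwind both sides of the asserted identity pointwise and to reduce everything to the intertwining property of horizontal lifts under $F_{\flat}$ established above. Fix $u=(p;e)\in O^{\uparrow}(\mathscr{M})$ and write $F_{\flat}u=(q;F_{\star}e)$ with $q=F(p)$. First I would expand the left-hand side using $\mathscr{D}_{[\gamma,A]}\varPsi=\gamma^{\mu}(L_{\mu}+\textrm{i}A_{\mu})\varPsi$, obtaining at $u$
\[
(\mathscr{D}_{[\gamma,A]}\varPsi)(F_{\flat}u)=\gamma^{\mu}(F_{\flat}u)\bigl[(L_{\mu}\varPsi)(F_{\flat}u)+\textrm{i}A_{\mu}(F_{\flat}u)\varPsi(F_{\flat}u)\bigr],
\]
and similarly expand $\mathscr{D}_{[\hat{\gamma},\hat{A}]}(\varPsi\circ F_{\flat})$ at $u$ as
\[
\hat{\gamma}^{\mu}(u)\bigl[L_{\mu}(\varPsi\circ F_{\flat})(u)+\textrm{i}\hat{A}_{\mu}(u)(\varPsi\circ F_{\flat})(u)\bigr].
\]
Substituting the definitions $\hat{\gamma}^{\mu}=\gamma^{\mu}\circ F_{\flat}$, $\hat{A}_{\mu}=A_{\mu}\circ F_{\flat}$ and $(\varPsi\circ F_{\flat})(u)=\varPsi(F_{\flat}u)$, all the algebraic factors match, and the whole statement collapses to the single identity $L_{\mu}(\varPsi\circ F_{\flat})(u)=(L_{\mu}\varPsi)(F_{\flat}u)$, to be verified componentwise in $\varPsi$ and for each $\mu$.

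The core step is this last identity, and here I would invoke the Lemma above stating that $(\widetilde{F_{\star}X}f)(F_{\flat}u)=(\tilde{X}(f\circ F_{\flat}))(u)$ for scalar $f$. The point is that $L_{\mu}(u)=\tilde{e}_{\mu}$ is by construction the horizontal lift at $u$ of the frame vector $e_{\mu}\in T_{p}\mathscr{M}$, while $L_{\mu}(F_{\flat}u)$ is the horizontal lift at $F_{\flat}u$ of the $\mu$-th frame vector of $F_{\flat}u$, which by the very definition of $F_{\flat}$ is $F_{\star}e_{\mu}$; hence $L_{\mu}(F_{\flat}u)=\widetilde{F_{\star}e_{\mu}}$. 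Applying the Lemma with $X=e_{\mu}$ and $f$ running over the scalar components $\varPsi_{i}$ of $\varPsi$ gives $(\widetilde{F_{\star}e_{\mu}}\varPsi_{i})(F_{\flat}u)=(\tilde{e}_{\mu}(\varPsi_{i}\circ F_{\flat}))(u)$, i.e. exactly $(L_{\mu}\varPsi_{i})(F_{\flat}u)=L_{\mu}(\varPsi_{i}\circ F_{\flat})(u)$. Feeding this back into the two expansions finishes the argument.

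In summary the steps are: (i) expand both operators at a point; (ii) substitute the definitions of $\hat{\gamma}$ and $\hat{A}$; (iii) recognise $L_{\mu}$ at a point as the horizontal lift of the corresponding frame vector and apply the intertwining Lemma componentwise; (iv) conclude. I expect step (iii) to be the only genuinely delicate point: one must be careful that the Lemma is applied at $F_{\flat}u$ with the frame vector $e_{\mu}$ transported correctly, and that applying it componentwise to a $\mathbb{C}^{4}$-valued field is legitimate. It is also worth noting in passing, though not needed for the identity itself, that $\hat{\gamma}=\gamma\circ F_{\flat}$ is again vertical by the consequence Lemma above and that $\hat{A}$ is again a section of $O^{\uparrow}(\mathscr{M})\otimes T^{\star}\mathscr{M}$, so $\mathscr{D}_{[\hat{\gamma},\hat{A}]}$ is a legitimate Dirac operator in the sense of the Definition.
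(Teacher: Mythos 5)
Your proposal is correct and follows essentially the same route as the paper: both expand the two operators pointwise, substitute the definitions of $\hat{\gamma}$ and $\hat{A}$, and reduce the claim to the identity $L_{\mu}(\varPsi_{l}\circ F_{\flat})(u)=(\widetilde{F_{\star}e_{\mu}}\varPsi_{l})(F_{\flat}u)$ obtained from the earlier horizontal-lift intertwining lemma with $X=e_{\mu}$, together with the observation that $L_{\mu}(F_{\flat}u)=\widetilde{F_{\star}e_{\mu}}$. Your closing remark that $\hat{\gamma}$ remains vertical is a useful sanity check the paper omits, but the argument is otherwise identical.
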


In fact, $F^{\star}g=g\circ F^{-1}$, which  induces a tangent mapping
$F_{\star}$ preserving the fibres of $O^{\uparrow}(\mathscr{M})$.
$F$ therefore induces an isomorphism $F_{\flat}$ of $O^{\uparrow}(M)$,
sending $u=(p;e)$ to $F_{\flat}u=(F(p);F_{\star}e)$, where $F_{\star}e$
is the frame at $F(p)$ with components $F_{\star}e_{\mu}$. Let $\varPhi=\varPsi\circ F_{\flat}$
and $q=F(p)$. By definition
\begin{align*}
(\mathscr{D}_{[\hat{\gamma},\hat{A}]}\varPsi\circ F_{\flat})_{k}(u) & =(\hat{\gamma}^{\mu}(u))_{\;k}^{l}(L_{\mu}(\varPsi_{l}\circ F_{\flat}))(u)\\
 & +\textrm{i}(\hat{\gamma}^{\mu}(u))_{\;k}^{l}\hat{A}_{\mu}(u)(\varPsi_{l}\circ F_{\flat})(u).
\end{align*}
Since 
\[
(L_{\mu}(\varPsi_{l}\circ F_{\flat}))(u)=(\widetilde{F_{\star}e_{\mu}}\varPsi_{l})(F_{\flat}u),
\]
and therefore 
\begin{align*}
(\mathscr{D}_{[\gamma,A]}\varPsi\circ F_{\flat})_{k}(u) & =(\hat{\gamma}^{\mu}(u))_{\;k}^{l}(\widetilde{F_{\star}e_{\mu}}\varPsi_{l})(F_{\flat}u)\\
 & +\textrm{i}(\hat{\gamma}^{\mu}(u))_{\;k}^{l}\hat{A}_{\mu}(u)\varPsi_{l}(F_{\flat}u).
\end{align*}
On the other hand consider $\mathscr{D}_{[\gamma,A]}\varPsi$ at $F_{\flat}u$,
where $u=(p;e_{\mu})$. 
By definition
\begin{align*}
(\mathscr{D}_{[\gamma,A]}\varPsi)_{k}(F_{\flat}u) & =(\gamma^{\mu}(F_{\flat}u))_{\;k}^{l}(\widetilde{F_{\star}e_{\mu}}\varPsi_{l})(F_{\flat}u)\\
 & +\textrm{i}(\gamma^{\mu}(F_{\flat}u))_{\;k}^{l}A_{\mu}(F_{\flat}u)\varPsi_{l}(F_{\flat}u)
\end{align*}
so the claim follows from the definitions of $\hat{\gamma}$ and $\hat{A}$
immediately.

\vskip0.3truecm

Let $\mathfrak{L}_{\mu}=x_{\mu}^{\;\nu}L_{\nu}$ at $u=(x^{\mu},x_{\;\sigma}^{\rho})\in O^{\uparrow}(\mathscr{M})$.
Then
\[
\mathfrak{L}_{\mu}=\frac{\partial}{\partial x^{\mu}}-\varGamma_{\mu\beta}^{\rho}x_{\;\sigma}^{\beta}\frac{\partial}{\partial x_{\;\sigma}^{\rho}}\quad\textrm{ for }\mu=0,1,2,3,
\]
so that
\[
\mathscr{D}_{[\gamma,A]}=x_{\;\nu}^{\mu}\gamma^{\nu}\left(\mathfrak{L}_{\mu}+\textrm{i}x_{\mu}^{\;\sigma}A_{\sigma}\right).
\]
The computations with $\mathfrak{L}$ often take simpler form, while
we should notice that on a curved space-time, $\left(x_{\;\nu}^{\mu}\gamma^{\nu}\right)$
is no longer a representation of $\gamma$-matrices.

\section{Dirac operators in rotationally invariant space-times}

In this section we work out the Dirac operators in two important curved
space-times, and pay attention in particular to the correction
terms due to the setup of reference frames. In these models of space-times, the manifold $\mathscr{M}=\mathbb{R}^{4}$
equipped with coordinates $(t,r,\theta,\phi)\equiv(x^{\mu})$, where $t$ is the coordinate for time, while $(r, \theta, \phi)$ indicates the space variables. 
The Dirac operator with a given vertical representation $\gamma(u)$
of the gamma matrices, is defined by
\[
\mathscr{D}_{\left[\gamma,A\right]}=\gamma^{\mu}\left(L_{\mu}+\textrm{i}A_{\mu}\right)=x_{\;\nu}^{\mu}\gamma^{\nu}\left(\mathfrak{L}_{\mu}+\textrm{i}x_{\mu}^{\;\sigma}A_{\sigma}\right)
\]
at $u=(x^{\mu};x_{\;\sigma}^{\rho})$. 

\subsection{The Schwarzschild spacetime}

The gravitational field in this space-time is defined by
\[
g_{00}=f(r),\quad g_{11}=-h(r)
\]
and
\[
g_{22}=-r^{2},\quad g_{33}=-r^{2}\sin^{2}\theta\quad\textrm{ and }\quad g_{\mu\nu}=0\quad\textrm{ for }\mu\neq\nu.
\]
For this model the Christoffel symbols (cf. \citep{Chandrasekhar1992}
and \citep{Wald1984}):
\[
\varGamma_{\mu\nu}^{0}=\frac{1}{2}\frac{f'}{f}\left(\delta_{\mu1}\delta_{\nu0}+\delta_{\nu1}\delta_{\mu0}\right),
\]
\[
\varGamma_{\mu\nu}^{1}=\frac{f'}{2h}\delta_{\mu0}\delta_{\nu0}+\frac{h'}{2h}\delta_{\mu1}\delta_{\nu1}-\frac{r}{h}\delta_{\mu2}\delta_{\nu2}-\frac{r}{h}\sin^{2}\theta\delta_{\mu3}\delta_{\nu3},
\]
\[
\varGamma_{\mu\nu}^{2}=\frac{1}{r}\left(\delta_{\mu1}\delta_{\nu2}+\delta_{\nu1}\delta_{\mu2}\right)-\sin\theta\cos\theta\delta_{\mu3}\delta_{\nu3}
\]
and
\[
\varGamma_{\mu\nu}^{3}=\frac{1}{r}\left(\delta_{\mu1}\delta_{\nu3}+\delta_{\nu1}\delta_{\mu3}\right)+\frac{\cos\theta}{\sin\theta}\left(\delta_{\mu2}\delta_{\nu3}+\delta_{\nu2}\delta_{\mu3}\right).
\]
Hence 
\[
\mathfrak{L}_{0}=\frac{\partial}{\partial t}-\frac{1}{2}\frac{f'}{f}x_{\;\sigma}^{1}\frac{\partial}{\partial x_{\;\sigma}^{0}}-\frac{f'}{2h}x_{\;\sigma}^{0}\frac{\partial}{\partial x_{\;\sigma}^{1}},
\]
\[
\mathfrak{L}_{1}=\frac{\partial}{\partial r}-\frac{1}{2}\frac{f'}{f}x_{\;\sigma}^{0}\frac{\partial}{\partial x_{\;\sigma}^{0}}-\frac{1}{2}\frac{h'}{h}x_{\;\sigma}^{1}\frac{\partial}{\partial x_{\;\sigma}^{1}}-\frac{1}{r}\left(x_{\;\sigma}^{2}\frac{\partial}{\partial x_{\;\sigma}^{2}}+x_{\;\sigma}^{3}\frac{\partial}{\partial x_{\;\sigma}^{3}}\right),
\]
\[
\mathfrak{L}_{2}=\frac{\partial}{\partial\theta}+\frac{r}{h}x_{\;\sigma}^{2}\frac{\partial}{\partial x_{\;\sigma}^{1}}-\frac{1}{r}x_{\;\sigma}^{1}\frac{\partial}{\partial x_{\;\sigma}^{2}}-\frac{\cos\theta}{\sin\theta}x_{\;\sigma}^{3}\frac{\partial}{\partial x_{\;\sigma}^{3}}
\]
and
\[
\mathfrak{L}_{3}=\frac{\partial}{\partial\phi}+\frac{r}{h}\sin^{2}\theta x_{\;\sigma}^{3}\frac{\partial}{\partial x_{\;\sigma}^{1}}+\sin\theta\cos\theta x_{\;\sigma}^{3}\frac{\partial}{\partial x_{\;\sigma}^{2}}-\left(\frac{1}{r}x_{\;\sigma}^{1}+\frac{\cos\theta}{\sin\theta}x_{\;\sigma}^{2}\right)\frac{\partial}{\partial x_{\;\sigma}^{3}}.
\]

The curvature tensor
\[
R_{\mu\nu\rho}^{\sigma}=\psi^{\sigma\rho}\left(\delta_{\mu\rho}\delta_{\nu\sigma}-\delta_{\mu\sigma}\delta_{\nu\rho}\right)g_{\rho\rho},
\]
where
\[
\psi^{01}=\psi^{10}=\frac{1}{2h}\left(\frac{f''}{f}-\frac{1}{2}\frac{f'f'}{f^{2}}-\frac{1}{2}\frac{f'h'}{fh}\right),
\]
\[
\psi^{02}=\psi^{20}=\psi^{03}=\psi^{30}=\frac{1}{2rh}\frac{f'}{f},
\]

\[
\psi^{12}=\psi^{21}=\psi^{13}=\psi^{31}=-\frac{1}{2rh}\frac{h'}{h},
\]

\[
\psi^{23}=\psi^{32}=-\frac{1-h}{r^{2}h}\quad\textrm{ and }\quad\psi^{\mu\mu}=0.
\]
Hence the observers correction term 
\[
x_{\mu}^{\;\sigma}x_{\nu}^{\;\rho}\left[L_{\sigma},L_{\rho}\right]=\sum_{\sigma}\left(\psi^{\nu\mu}g_{\mu\mu}x_{\;\sigma}^{\mu}\frac{\partial}{\partial x_{\;\sigma}^{\nu}}-\psi^{\mu\nu}g_{\nu\nu}x_{\;\sigma}^{\nu}\frac{\partial}{\partial x_{\;\sigma}^{\mu}}\right).
\]

For the Schwarzschild metric, $f(r)=1-\frac{2M}{r}$ and $hf=1$,
so that
\[
\psi^{01}=\psi^{10}=-\psi^{23}=-\psi^{32}=-\frac{2M}{r^{3}},
\]
\[
\psi^{02}=\psi^{20}=\psi^{03}=\psi^{30}=\psi^{12}=\psi^{21}=\psi^{13}=\psi^{31}=\frac{M}{r^{3}},\quad\textrm{ and }\psi^{\mu\mu}=0.
\]

\subsection{The Robertson-Walker spacetime}

The Robertson-Walker metric in the spherical coordinates $(x^{\mu})=(t,r,\theta,\phi)$
is described by the following:
\[
g_{00}=1,\quad g_{11}=-\frac{a^{2}(t)}{1-Kr^{2}},\quad g_{22}=-a^{2}(t)r^{2}\quad\textrm{ and }\quad g_{33}=-a^{2}(t)r^{2}\sin^{2}\theta.
\]
Hence the Christoffel symbols
\[
\varGamma_{\mu\nu}^{0}=-f_{0}\sum_{j=1}^{3}g_{jj}\delta_{\mu j}\delta_{\nu j},
\]
\[
\varGamma_{\mu\nu}^{1}=\sum_{j=1,2,3}h_{j}g_{jj}\delta_{\mu j}\delta_{\nu j}+f_{0}\left(\delta_{\mu1}\delta_{\nu0}+\delta_{\mu0}\delta_{\nu1}\right),
\]

\[
\varGamma_{\mu\nu}^{2}=-f_{2}g^{22}g_{33}\delta_{\mu3}\delta_{\nu3}+\sum_{j=0,1}f_{j}\left(\delta_{\mu2}\delta_{\nu j}+\delta_{\mu j}\delta_{\nu2}\right)
\]
and
\[
\varGamma_{\mu\nu}^{3}=\sum_{j=0,1,2}f_{j}\left(\delta_{\mu3}\delta_{\nu j}+\delta_{\mu j}\delta_{\nu3}\right),
\]
where
\[
f_{0}(t)=\frac{a'(t)}{a(t)},\quad f_{1}(r)=\frac{1}{r},\quad f_{2}(\theta)=\frac{\cos\theta}{\sin\theta}
\]
and
\[
h_{1}=-\frac{Kr}{a^{2}(t)},\quad h_{2}=-g^{11}f_{1},\quad h_{3}=-g^{11}f_{1}.
\]
Hence
\[
\mathfrak{L}_{0}=\frac{\partial}{\partial t}-f_{0}\sum_{j=1}^{3}x_{\;\sigma}^{j}\frac{\partial}{\partial x_{\;\sigma}^{j}},
\]
\begin{align*}
\mathfrak{L}_{1} & =\frac{\partial}{\partial r}+g_{11}x_{\;\sigma}^{1}\left(f_{0}\frac{\partial}{\partial x_{\;\sigma}^{0}}-h_{1}\frac{\partial}{\partial x_{\;\sigma}^{1}}\right)-f_{0}x_{\;\sigma}^{0}\frac{\partial}{\partial x_{\;\sigma}^{1}}\\
 & -f_{1}\left(x_{\;\sigma}^{2}\frac{\partial}{\partial x_{\;\sigma}^{2}}+x_{\;\sigma}^{3}\frac{\partial}{\partial x_{\;\sigma}^{3}}\right),
\end{align*}
\begin{align*}
\mathfrak{L}_{2} & =\frac{\partial}{\partial\theta}+g_{22}x_{\;\sigma}^{2}\left(f_{0}\frac{\partial}{\partial x_{\;\sigma}^{0}}-h_{2}\frac{\partial}{\partial x_{\;\sigma}^{1}}\right)\\
 & -\left(f_{0}x_{\;\sigma}^{0}+f_{1}x_{\;\sigma}^{1}\right)\frac{\partial}{\partial x_{\;\sigma}^{2}}-f_{2}x_{\;\sigma}^{3}\frac{\partial}{\partial x_{\;\sigma}^{3}}
\end{align*}
and
\begin{align*}
\mathfrak{L}_{3} & =\frac{\partial}{\partial\phi}+g_{33}x_{\;\sigma}^{3}\left(f_{0}\frac{\partial}{\partial x_{\;\sigma}^{0}}-h_{3}\frac{\partial}{\partial x_{\;\sigma}^{1}}+f_{2}g^{22}\frac{\partial}{\partial x_{\;\sigma}^{2}}\right)\\
 & -\left(\sum_{j=0,1,2}f_{j}x_{\;\sigma}^{j}\right)\frac{\partial}{\partial x_{\;\sigma}^{3}}.
\end{align*}
The curvature tensor
\[
R_{\mu\nu\rho}^{\sigma}=\psi^{\sigma\rho}\left(\delta_{\mu\sigma}\delta_{\nu\rho}-\delta_{\mu\rho}\delta_{\nu\sigma}\right)
\]
where
\[
\psi^{0\sigma}=(1-\delta_{0\sigma})\frac{a''(t)}{a(t)}g_{\sigma\sigma},\quad\textrm{ for }\sigma=0,1,2,3.
\]
and
\[
\psi^{i0}=(1-\delta_{i0})\frac{a''(t)}{a(t)}g_{00},\quad\psi^{ij}=(1-\delta_{ij})\frac{a'(t)^{2}+K}{a(t)^{2}}g_{jj}\textrm{ for }i,j=1,2,3.
\]
The observers corrections are given as the following.
\[
x_{0}^{\;\sigma}x_{i}^{\;\rho}\left[L_{\sigma},L_{\rho}\right]=\frac{a''(t)}{a(t)}\left(g_{ii}x_{\;\sigma}^{i}\frac{\partial}{\partial x_{\;\sigma}^{0}}-g_{00}x_{\;\sigma}^{0}\frac{\partial}{\partial x_{\;\sigma}^{i}}\right)
\]
for $i=1,2,3$ and 
\[
x_{i}^{\;\sigma}x_{j}^{\;\rho}\left[L_{\sigma},L_{\rho}\right]=\frac{a'(t)^{2}+K}{a(t)^{2}}\left(g_{jj}x_{\;\sigma}^{j}\frac{\partial}{\partial x_{\;\sigma}^{i}}-g_{ii}x_{\;\sigma}^{i}\frac{\partial}{\partial x_{\;\sigma}^{j}}\right)
\]
for $1\leq i<j\leq3$.

\section{Field equations}

In this section we propose field equations which include the gravitational
field and matter fields as well, based on the concept of space-time
structures introduced in the previous sections. The space-time is
modelled according to Einstein by a continuum $\mathscr{M}$ a connected
manifold of four dimensions, equipped with a local coordinate system
$(x^{\mu})$ which preserves the time-orientation, in the sense that
the collection $\mathscr{G}(\mathscr{M})$ of Lorentz metrics on $\mathscr{M}$
compatible with the time-orientation is not empty. The frame bundle
over $\mathscr{M}$ is denoted by $L(\mathscr{M})$ which has the
induced coordinate system $(x^{\mu},x_{\;\rho}^{\sigma})$ or simply
written as $(x,X)$. If $g\in\mathscr{G}(\mathscr{M})$ then $\textrm{d}s^{2}=g_{\mu\nu}\textrm{d}x^{\mu}\textrm{d}x^{\nu}$
where $g_{00}>0$, which in turn defines a natural volume measure
$V_{g}(u,\textrm{d}u)$ on $O^{+}(\mathscr{M})$. This measure $V_{g}(u,\textrm{d}u)$
depends only on the Lorentz metric $g=(g_{\mu\nu})$ and can be split
as a (quasi) product measure. Indeed if $u=(x,e)$ represents a general
element of $O^{+}(\mathscr{M})$ then
\[
V_{g}(u,\textrm{d}u)=K_{g(x)}(x,\textrm{d}e)\sqrt{-\det g(x)}\textrm{d}x
\]
where of course $\sqrt{-\det g}\textrm{d}x$ is the volume measure
on $\mathscr{M}$. $K_{g(x)}(x,\textrm{d}e)$ is a kernel so that it is
a measure for every fixed $x$ on the fibre $O_{x}^{+}(\mathscr{M})$
which can be described as the following.

Let $x\in\mathscr{M}$ with coordinates $(x^{\mu})$. Then $e=(x_{\;\rho}^{\sigma})$
belongs to $O_{x}^{+}(\mathscr{M})$ if and only if 
\begin{equation}
g_{\sigma\rho}(x)x_{\;\mu}^{\sigma}x_{\;\nu}^{\rho}=\eta_{\mu\nu}\label{G-c1}
\end{equation}
(10 independent equations) so that $O_{x}^{+}(\mathscr{M})$ is identified
with a six dimensional sub-manifold of $\mathbb{R}^{4\times4}$
equipped with the standard metric and therefore with the Lebesgue
measure. Therefore $O_{x}^{+}(\mathscr{M})$ carries an induced Riemannian
metric from $\mathbb{R}^{4\times4}$ and  the induced volume
measure which is $K_{g(x)}(x,\textrm{d}e)$.

In order to write down the field equations, a few notations
have to be introduced. Suppose $A=(A_{\sigma\rho})$ is a $4\times4$
non-degenerate and symmetric matrix. Consider the sub-manifold $N$
of $\mathbb{R}^{4\times4}$ determined by the ten quadratic equations:
\begin{equation}
A_{\sigma\rho}x_{\;\mu}^{\sigma}x_{\;\nu}^{\rho}=\eta_{\mu\nu}\quad\textrm{ for }\mu\geq\nu.\label{A-01}
\end{equation}

By symmetry, we may use $(x_{\;\rho}^{\sigma})_{\sigma<\rho}$ as
coordinates for the sub-manifold $N$ of six dimensions. By solving
$x_{\;\nu}^{\mu}$ for $\mu\geq\nu$ from (\ref{A-01}) in terms of
$x_{\;\rho}^{\sigma}$ (where $\sigma<\rho$) we may write 
\[
x_{\;\nu}^{\mu}=f_{\;\nu}^{\mu}(x_{\;\rho}^{\sigma}:\sigma<\rho)=f_{\;\nu}^{\mu}(X:<)
\]
where we have used the short notation $X:<$ to denote $(x_{\;1}^{0},x_{\;2}^{0},x_{\;3}^{0},x_{\;2}^{1},x_{\;3}^{1},x_{\;3}^{2})$.
The induced Riemann metric on $N$ is given by
\[
G_{(\sigma<\rho),(\alpha<\beta)}(A)=\delta_{(\sigma<\rho),(\alpha<\beta)}+\sum_{\mu\geq\nu}\frac{\partial f_{\;\nu}^{\mu}}{\partial x_{\;\rho}^{\sigma}}\frac{\partial f_{\;\nu}^{\mu}}{\partial x_{\;\beta}^{\alpha}}
\]
and the volume measure on $N$ has an explicit representation.
\[
\sqrt{\det\left(G_{(\sigma<\rho),(\alpha<\beta)}(A)\right)}\prod_{\mu<\nu}\textrm{d}x_{\;\nu}^{\mu}.
\]
The important fact to us is that the following mapping 
\[
A\rightarrow\log\det\left(G_{(\sigma<\rho),(\alpha<\beta)}(A)\right)
\]
is differentiable and 
\[
\left.\frac{d}{d\varepsilon}\right|_{\varepsilon=0}\log\sqrt{\det G(A(\varepsilon))}=\left(\frac{1}{2}D\log\det G(A)\right)(\delta A)
\]
which depends on the variation $\delta A$ linearly. 

We are now in a position to write down the field equations involving
both the gravitational field and matter fields. 

The action functional according to quantum field theories may be written
as
\[
L(g,\varPsi,\varPhi,D_{[B]}\Phi,\mathscr{D}_{[\gamma,A]}\varPhi,A,B)
\]
which is defined in terms of the following integral
\[
\int_{O^{+}(\mathscr{M})}\left(R(g)+\mathscr{L}(g,\varPsi,\varPhi,D_{[B]}\Phi,\mathscr{D}_{[\gamma,A]}\varPhi,A,B)\right)V_{g}(u,\textrm{d}u)
\]
where $g$ varies in $\mathscr{G}(\mathscr{M})$, $\varPsi$ and $\varPhi$
represent tensor type fields and spinor fields, $B$ represents gauge
fields and $D_{[B]}$ the corresponding co-variant derivative, $\mathscr{D}_{[\gamma,A]}$
are Dirac operators and $A$ are vector fields. However we do not
consider the representation of Gamma matrix $\gamma$ as a field and
therefore $\gamma$ is fixed. In this expression, $R(g)$ is the scalar
curvature of $g$ and 
\[
\mathscr{L}(g,\varPsi,\varPhi,D_{[B]}\Phi,\mathscr{D}_{[\gamma,A]}\varPhi,A,B)
\]
is a Lagrangian density with respect to the volume measure on ${O^{+}(\mathscr{M})} $. The action functional defines the field equations:
\[
\frac{\delta}{\delta g}L(g,\varPsi,\varPhi,D_{[B]}\Phi,\mathscr{D}_{[\gamma,A]}\varPhi,A,B)=0,
\]
\[
\frac{\delta}{\delta\varPsi}L(g,\varPsi,\varPhi,D_{[B]}\Phi,\mathscr{D}_{[\gamma,A]}\varPhi,A,B)=0
\]
and etc., where the first equation, the field equation of variations
of Lorentz metric $g$ is the generalization of Einstein's gravitational field
equation. 

This generalisation of Einstein's field equation without matter fields
(i.e. for the case where the Lagrangian density $\mathscr{L}$ vanishes
identically), under the local coordinate system $(x,X)=(x^{\mu},x_{\;\rho}^{\sigma})$
for $\sigma<\rho$, is given as the field equation:
\[
R_{ab}-\frac{1}{2}Rg_{ab}+\frac{1}{2}RD\log\det G(g,X<)_{ab}=0,
\]
here the additional term appears as the observational correction term. 

Thanks to the concept of the space-time structure, as if it is given by the lord nature, the field equations proposed in this article, involve the co-variant derivatives of the space-time variables, but do not involve any frame dynamics or dynamics of the gamma matrices, or otherwise one must give up the idea of references entering our formulation of natural laws.


\begin{thebibliography}{99}
\bibitem{Bargmann1932} Bargmann, V. 1932, \emph{Sitzsungsber. Preuss. Akad. Wiss.,
Phys. - Math. Kl.}, $\mathbf{28}$, 346-354.

\bibitem{Bigi-Sanda2009} Bigi, I. I. and Sanda, A. I. 2009 \emph{CP
violation}. Second Edition. Cambridge University Press.

\bibitem{BirrelDavis1982} Birrel, N. D. and Davis, P. C. W. 1982
\emph{Quantum fields in curved space}. Cambridge University Press,
Cambridge. 

\bibitem{Branco1999} Branco, G. C., Lavoura, L. and Silva, J. P.
1999 \emph{CP violation}. Clarendon Press, Oxford.

\bibitem{Cartan1937} Cartan, \'E. 1937 \emph{La th\'eorie des
spineurs, }Hermann, Paris. English translation: \emph{The theory
of spinors}, The M.I.T Press, Cambridge, Mass., 1966.

\bibitem{Chandrasekhar1992} Chandrasekhar, S. 1992 \emph{The mathematical
theory of black holes}. Clarendon Press, Oxford. 

\bibitem{Christenson1964} Christenson, J. H., Cronin, J. W., Fitch,
V. L., and Turlay, R. 1964 \emph{Phys. Rev. Letters} $\mathbf{13}$, 138.

\bibitem{deDonder1926} de Donder, The. and van Dungen, H. 1926 (July)
\emph{Comptes Rendus}. 

\bibitem{Dirac1928} Dirac, P. A. M. 1928, \emph{Proc. Roy. Soc.} 
(\emph{London}) $\mathbf{A117}$,
pp. 610-624.

\bibitem{Dirac1935} Dirac, P. A. M. 1935, \emph{ Ann. Math}. Second Series, Vol. $\mathbf{35}$,
No. 3. pp. 657-669.

\bibitem{Dirac1936} Dirac, P. A. M. 1936,
\emph{Ann. Math}. Second Series, Vol. $\mathbf{37}$, No. 2,
pp. 429-442.

\bibitem{Dirac-principles} Dirac, P. A. M. 1958 \emph{The principles
of quantum mechanics}, 4th ed., Oxford at the Clarendon Press. 

\bibitem{Einstein1916} Einstein, A. 1916, \emph{Annalen der Physik}, $\mathbf{49}$.
English translation: ``The foundation of the general theory of relativity'',
in \emph{The Principle of Relativity}, by H. A. Lorentz, A. Einstein,
H. Minkowski and H. Weyl. Dover Publications, INC.

\bibitem{Fulling1989} Fulling, S. A. 1989 \emph{Aspects of quantum
field theory in curved spacetime}. Cambridge University Press.

\bibitem{Fock1926} Fock, V. 1926, \emph{Z. f. Phys}. $\mathbf{38}$,
242; \emph{ibid}, $\mathbf{39}$, 226.

\bibitem{Fock1929} Fock, V. 1929, \emph{Z. Phys}. $\mathbf{57}$, 261-277.

\bibitem{Geroch1968} Geroch, R. 1968, \emph{J. of Amer. Phys.} Vol. $\mathbf{9}$, Nov. 11, 1739-1744.

\bibitem{Gordon1926} Gordon, W. 1926, \emph{Z. f. Phys}. $\mathbf{40}$,
117.

\bibitem{GiuntiKim2007} Giunti, C. and Kim, C. W. 2007 \emph{Fundamentals
of neutrino physics and astrophysics}. Oxford University Press. 

\bibitem{Glashow1961} Glashow, S. L. 1961, \emph{Nucl. Phys}., $\mathbf{22}$,
579-588.

\bibitem{GurseyLee1963} Gursey, F. and Lee, T. D. 1963, \emph{Proceedings of the National
Academy of Sciences of the USA}, Vol. $\mathbf{49}$, No. 2 (Feb.
15), pp. 179-186.

\bibitem{HawkingEllis1973} Hawking, S. W. and Ellis, G. F. R. 1973
\emph{The large scale structure of space-time}. Cambridge University
Press.

\bibitem{InfeldWaerden1933} Infeld, L. and van der Waerden, B. L.
1933, 
\emph{Sizungsb. Akad. Berlin}, 380.

\bibitem{Kerner1981} Kerner, R. 1981, \emph{Annales de L'I.H.P., section A}, tome $\mathbf{34}$, 4 (1981), p.437-463.

\bibitem{Klein1926} Klein, O. 1926, \emph{Z. f. Phys}. $\mathbf{37}$,
895. 

\bibitem{KobayashiNomizu-vol1} Kobayashi, S. and Nomizu, K. 1963
\emph{Foundations of differential geometry}. Vol. 1. Interscience
Pub. 

\bibitem{Landau1957} Landau, L. 1957, \emph{Nuclear Phys}. 3, 127.

\bibitem{LeeYang1956} Lee, T. D. and Yang, C. N. 1956, \emph{Phys.
Rev}., $\mathbf{104}$, 254.

\bibitem{MessiahII} Messiah, A. 1962 \emph{Quantum mechanics. }Vol.
II. North-Holland Publishing Company, Amsterdam. 

\bibitem{NewmanPenrose1962} Newman, E. T. and Penrose, R. 1962,
\emph{J. Math. Phys}. $\mathbf{3}$, 566-78.

\bibitem{NewmanPenrose1968} Newman, E. T. and Penrose, R. 1968,
 \emph{Proce. Roy. Soc. Lond}. A 305, 175-204.

\bibitem{Oppenheim2023} Oppenheim, J. 2023, \emph{Phys. Rev}. X $\mathbf{13}$, 041040 (2023).
DOI: 10.1103/PhysRevX.13.041040

\bibitem{OppenheimAND2023} Oppenheim, J. Sparaciari, C., $\check{S}$oda,
B. and Weller-Davis, Z. 2023, \emph{Nature
Commu}. (2023)14:7910 https//doi.org/10.1038/s41467--023-43348-2

\bibitem{ParkerToms2009} Parker, L. and Toms, D. 2009, \emph{Quantum
field theory in curved spacetime -- quantized fields and gravity}.
Cambridge Monographs on Mathematical Physics. Cambridge University
Press. 

\bibitem{Pauli1921} Pauli, W. 1921, ``Relativit\"atstheorie''
in \emph{Encyklop\"adie der mathematischen Wissenschaften}, Vol.
V19 (B. G. TEubner, Leipzig (1921). English translation by G. Field,
\emph{Theory of relativity}. Pergamon Press, Oxford, London, New York,
Paris (1958). 

\bibitem{Pauli1936} Pauli, W. 1936, \emph{Annales de l'Institut
Henri Poincar\'e}. Volume $\mathbf{6}$ no. 2, pp. 109-136.

\bibitem{Penrose1960} Penrose, R. 1960,  \emph{Ann. Phys.}, $\boldsymbol{10}$, pp. 171-201.

\bibitem{Penrose1966} Penrose, R. 1966, \emph{An analysis of the
structure of space-time}, Cambridge University Adams Prize Essay,
Cambridge, England. 

\bibitem{PenroseMacCallum1972} Penrose, R. and MacCallum, M. A. H.
1972, \emph{Phys. Rept}., $\mathbf{6C}$, pp. 241-316.

\bibitem{PenroseRindler1984} Penrose, R. and Rindler, W. 1984, \emph{Spinors
and space-time}, Vol. $\boldsymbol{1}$ and $\boldsymbol{2}$, Cambridge
University Press.

\bibitem{Pollock2010} Pollock, M. D. 2010, \emph{Acta Physica Polonica} B, 
Vol. $\mathbf{41}$,
1827-1846. 

\bibitem{Schrodinger1926} Schr\"odinger, E. 1926, \emph{Ann. Phys}.
$\mathbf{79}$, 361, 489; $\mathbf{80}$, 437; $\mathbf{81}$, 109. 

\bibitem{Schrodinger1929} Schr\"odinger, E. 1929,  
\emph{Sitzsungsber. Preuss. Akad. Wiss., Phys. - Math.
Kl.}, $\mathbf{57}$, 261-277.

\bibitem{Schrodinger1932} Schr\"odinger, E. 1932, 
\emph{Sitzsungsber. Preuss. Akad. Wiss., Phys. - 
Math. Kl.}, 105-128.

\bibitem{Salam1968} Salam, A. 1968, Weak and electromagnetic interactions,
in \emph{Elementary particle theory}, ed. N. Svartholm (Almqvist
and Wiksell, Stockholm). Also in the \emph{Selected papers on Gauge
theory of weak and electromagnetic interations} Edited by C. H.
Lai, 188-198, World Sci. Singapore (1981).

\bibitem{Wald1984} Wald, R. M. 1984, \emph{General relativity}.
The University of Chicago Press, Chicago and London. 

\bibitem{Wald 1994} Wald, R. M. 1994, \emph{Quantum field theory
in curved spacetime and black hole thermodynamics}. University of
Chicago Press, Chicago.

\bibitem{Wald2006} Wald, R. M. 2006, The history and present status
of quantum field theory in curved spacetime. \emph{arXiv}:gr-qc/0608018 

\bibitem{Weinberg1967} Weinberg, S. 1967, \emph{Phys. Rev. Lett}.,
$\mathbf{19}$, 1264-1266 (1967).

\bibitem{Weinberg-Fields} Weinberg, S. 1995, 2000, \emph{The quantum
theory of fields}, Vol. I, II (1995) and III (2000). Cambridge University
Press. 

\bibitem{Weyl1950} Weyl, H. 1950, \emph{Phys. Rev}. 77, 699-701.

\bibitem{Wu1957} Wu, C. S. et al. 1957, \emph{Phys. Rev}., $\mathbf{105}$,
1413. 

\end{thebibliography}
\end{document}